\theoremstyle{plain}
\newtheorem{theorem}{Theorem}[section]
\newtheorem{corollary}[theorem]{Corollary}
\newtheorem{proposition}[theorem]{Proposition}
\theoremstyle{definition}
\newtheorem{definition}[theorem]{Definition}
\theoremstyle{remark}
\newtheorem{remark}{Remark}
\journal{ArXiv.org}
\begin{document}

\begin{frontmatter}

\title{Time-limited $\mathcal{H}_2$-optimal Model Order Reduction of Linear Systems with Quadratic Outputs}

\author[uz]{Umair~Zulfiqar\corref{mycorrespondingauthor}}
\cortext[mycorrespondingauthor]{Corresponding author}
\ead{umairzulfiqar@shu.edu.cn}
\author[zx]{Zhi-Hua~Xiao}
\author[qs]{Qiu-Yan~Song}
\author[mud]{Mohammad~Monir~Uddin}
\author[vs]{Victor~Sreeram}
\address[uz]{School of Electronic Information and Electrical Engineering, Yangtze University, Jingzhou, Hubei, 434023, China}
\address[zx]{School of Information and Mathematics, Yangtze University, Jingzhou, Hubei, 434023, China}
\address[qs]{School of Mechatronic Engineering and Automation, Shanghai University, Shanghai, 200444, China}
\address[mud]{Department of Mathematics and Physics, North South University, Dhaka, 1229, Bangladesh}
\address[vs]{Department of Electrical, Electronic, and Computer Engineering, The University of Western Australia, Perth, 6009, Australia}
\begin{abstract}
An important class of dynamical systems with several practical applications is linear systems with quadratic outputs. These models have the same state equation as standard linear time-invariant systems but differ in their output equations, which are nonlinear quadratic functions of the system states. When dealing with models of exceptionally high order, the computational demands for simulation and analysis can become overwhelming. In such cases, model order reduction proves to be a useful technique, as it allows for constructing a reduced-order model that accurately represents the essential characteristics of the original high-order system while significantly simplifying its complexity.

In time-limited model order reduction, the main goal is to maintain the output response of the original system within a specific time range in the reduced-order model. To assess the error within this time interval, a mathematical expression for the time-limited $\mathcal{H}_2$-norm is derived in this paper. This norm acts as a measure of the accuracy of the reduced-order model within the specified time range. Subsequently, the necessary conditions for achieving a local optimum of the time-limited $\mathcal{H}_2$ norm error are derived. The inherent inability to satisfy these optimality conditions within the Petrov-Galerkin projection framework is also discussed. After that, a stationary point iteration algorithm based on the optimality conditions and Petrov-Galerkin projection is proposed. Upon convergence, this algorithm fulfills three of the four optimality conditions. To demonstrate the effectiveness of the proposed algorithm, a numerical example is provided that showcases its ability to effectively approximate the original high-order model within the desired time interval.
\end{abstract}

\begin{keyword}
$\mathcal{H}_2$-optimal\sep time-limited\sep model order reduction\sep projection\sep reduced-order model\sep quadratic output
\end{keyword}

\end{frontmatter}

\section{Introduction}
This research work studies a special class of nonlinear dynamical systems characterized by weak nonlinearity. These systems retain linear time-invariant (LTI) state equations but incorporate quadratic nonlinear terms in their output equations, thus named linear quadratic output (LQO) systems \citep{montenbruck2017linear}. They naturally arise in scenarios requiring the observation of quantities that involve the product of state components, either in the time or frequency domain. These systems are valuable in quantifying energy or power of dynamical systems, for example, in assessing a system's internal energy \citep{van2006port} or the cost function in optimal quadratic control problems \citep{picinbono1988optimal}. Moreover, they are used to measure deviations of state coordinates from a reference point, such as for calculating the root mean squared displacement of spatial coordinates around an excitation point or for estimating the variance of a random variable in stochastic modeling \citep{aumann2023structured}.

Consider an LQO system defined by the following state and output equations:
\begin{align}
 H:=
  \begin{cases}
   \dot{x}(t)=Ax(t)+Bu(t),\\
  y(t)=Cx(t)+\begin{bmatrix}x(t)^TM_1x(t)\\\vdots\\x(t)^TM_px(t)\end{bmatrix},
  \end{cases}\label{steq:1}
\end{align} where $x(t)\in\mathbb{R}^{N\times 1}$ denotes the state vector, $u(t)\in\mathbb{R}^{m\times 1}$ represents inputs, $y(t)\in\mathbb{R}^{p\times 1}$ is the output vector. The matrices $A\in\mathbb{R}^{N\times N}$, $B\in\mathbb{R}^{N\times m}$, $C\in\mathbb{R}^{p\times N}$, and $M_i\in\mathbb{R}^{N\times N}$ consititute the state-space realization $(A,B,C,M_1,\cdots,M_p)$. The state equation is identical in form to that of a standard LTI system. However, the output equation introduces a nonlinearity through the quadratic terms $x(t)^TM_ix(t)$, which distinguishes the LQO system from a standard LTI system.

Accurately modeling complex physical phenomena frequently necessitates dynamical systems of exceptionally high order, often in the range of several thousand or more. Due to this very high order $N$, simulating and analyzing the model (\ref{steq:1}) becomes computationally intensive and impractical. Consequently, it becomes necessary to approximate (\ref{steq:1}) with a reduced-order model (ROM) of significantly lower order $n$ (where $n\ll N)$, which simplifies simulation and analysis \citep{van2012model}. Model order reduction (MOR) refers to the process of constructing a ROM while ensuring that the essential features and characteristics of the original model are preserved \citep{benner2005dimension}.

We define the $n^{th}$-order ROM of the system $H$ as $\hat{H}$, which is described by the state and output equations shown in \eqref{steq:2}
\begin{align}
 \hat{H}:=
  \begin{cases}
   \dot{x_n}(t)=\hat{A}x_n(t)+\hat{B}u(t),\\
  y_n(t)=\hat{C}x_n(t)+\begin{bmatrix}x_n(t)^T\hat{M_1}x_n(t)\\\vdots\\x_n(t)^T\hat{M_p}x_n(t)\end{bmatrix}.
  \end{cases}\label{steq:2}
\end{align} Here, $\hat{A}$, $\hat{B}$, $\hat{C}$, and $\hat{M}_i$ are matrices derived from the original system matrices $A$, $B$, $C$, and $M_i$, respectively, through the Petrov-Galerkin projection condition
$\hat{W}^T\hat{V}=I$. Specifically, $\hat{A}=\hat{W}^TA\hat{V}$, $\hat{B}=\hat{W}^TB$, $\hat{C}=C\hat{V}$, and $\hat{M_i}=\hat{V}^TM_i\hat{V}$. The projection matrices $\hat{V}\in\mathbb{R}^{N\times n}$ and $\hat{W}\in\mathbb{R}^{N\times n}$ are used to project the original system $H$ onto a reduced subspace, resulting in the ROM $\hat{H}$. Different MOR techniques vary in their approach to constructing $\hat{V}$ and $\hat{W}$, which is dependent on the specific characteristics of $H$ that need to be preserved in $\hat{H}$ \citep{antoulas2004model,antoulas2015model,antoulas2020interpolatory}. Throughout this paper, it is assumed that both $A$ and $\hat{A}$ are Hurwitz matrices.

The Balanced Truncation (BT) method, developed in 1981, is a widely utilized MOR technique \citep{moore1981principal}. This method selectively retains states that significantly contribute to the energy transfer between inputs and outputs, while discarding those with minimal influence, as determined by their Hankel singular values. One key advantage of BT is its capability to estimate errors \textit{a priori} before constructing the ROM \citep{enns1984model}. Furthermore, BT ensures that the stability of the original model is maintained. Initially developed for standard LTI systems, BT has significantly expanded its applicability to encompass various system types, including descriptor systems \citep{mehrmann2005balanced}, second-order systems \citep{reis2008balanced}, linear time-varying systems \citep{sandberg2004balanced}, parametric systems \citep{son2021balanced}, nonlinear systems \citep{kramer2022balanced}, and bilinear systems \citep{benner2017truncated}, among others. Additionally, BT has been adapted to preserve specific system properties, such as positive realness \citep{wong2007fast}, bounded realness \citep{guiver2013bounded}, passivity \citep{wong2004passivity}, and special structural characteristics \citep{sarkar2023structure}. For a comprehensive understanding of the diverse BT algorithm family, refer to the survey \citep{gugercin2004survey}. BT has been extended to LQO systems in \citep{van2010model,pulch2019balanced,benner2021gramians}. Among these algorithms, only the one presented in \citep{benner2021gramians} maintains the LQO structure in the ROM.

A locally optimal solution in the $\mathcal{H}_2$ norm is achievable using several efficient algorithms. The necessary conditions for this local optimum, known as Wilson's conditions \citep{wilson1970optimum}, involve interpolation of the original system at selected points. The iterative rational Krylov algorithm (IRKA) is a well-known algorithm that can achieve this local optimum \citep{gugercin2008h_2}. A more general algorithm, which does not assume the original system having simple poles, is also available \citep{xu2011optimal}. This algorithm is based on Sylvester equations and its numerical properties have been improved in \citep{MPIMD11-11}. Recently, the $\mathcal{H}_2$ MOR problem for LQO systems has been addressed, and an algorithm based on Sylvester equations has been proposed as a solution \citep{reiter2024h2}.

BT aims to approximate the original model dynamics over the entire time horizon. However, practical systems and simulations often operate within finite time intervals due to real-world constraints. For example, in interconnected power systems, low-frequency oscillations typically last around 15 seconds and are mitigated by power system stabilizers and damping controllers \citep{kundur2007power}. The first $15$ seconds are crucial for small-signal stability assessments \citep{rogers2000nature}. Similarly, in finite-time optimal control problems, the system's behavior within a specific time frame is of utmost importance \citep{grimble1979solution}. This need has led to the development of time-limited MOR, which focuses on achieving good accuracy within a specified time interval rather than the entire time domain \citep{gawronski1990model}.

The time-limited MOR problem focuses on developing a ROM that guarantees a small deviation between the outputs of the original model $ y(t) $ and the ROM's output $ y_n(t) $ for a given input signal $ u(t)$, but only within a specified limited time interval $[0, \tau]$ seconds. In other words, the goal is to ensure that the approximation error $ \| y(t) - y_n(t) \| $ remains minimal within this specified time frame \citep{zulfiqar2024relative}.

To address the time-limited MOR problem, BT has been adapted into the time-limited BT (TLBT) algorithm \citep{gawronski1990model}. Although TLBT does not fully retain all characteristics of traditional BT, such as stability guarantees and \textit{a priori} error bounds, it effectively addresses the time-limited MOR scenario. The computational aspects of TLBT and strategies for handling large-scale systems have been explored in \citep{kurschner2018balanced}. Furthermore, TLBT has been extended to handle descriptor systems \citep{kurschner2018balanced}, second-order systems \citep{benner2021frequency}, and bilinear systems \cite{shaker2014time}, enhancing its applicability. Recently, TLBT has also been extended to LQO systems in \citep{song2024balanced}.

In \citep{goyal2019time}, a new norm called the time-limited $\mathcal{H}_2$ norm is defined, and the conditions needed to find a local optimum for this norm are derived. An algorithm called time-limited IRKA (TLIRKA) is then proposed based on IRKA to construct a local optimum, but it does not meet any necessary conditions. In \citep{sinani2019h2}, new optimality conditions based on interpolation are derived, and a nonlinear optimization algorithm is proposed to construct the local optimum, whose applicability is limited to medium-scale systems. This paper studies the time-limited $\mathcal{H}_2$-optimal MOR problem for LQO systems.

The key contributions of this research work are manifold. First, it defines the time-limited $\mathcal{H}_2$ norm ($\mathcal{H}_{2,\tau}$ norm) for LQO systems and demonstrates its computation using time-limited system Gramians defined in \citep{song2024balanced}. Second, it derives the necessary condition for achieving a local optimum of $||H-\hat{H}||_{\mathcal{H}_{2,\tau}}^2$. Third, these conditions are then contrasted with those related to the standard $\mathcal{H}_2$-optimal MOR \citep{reiter2024h2}. Notably, it is shown that Petrov-Galerkin projection cannot, in general, attain a local optimum in the time-limited setting. Fourth, a stationary point algorithm, rooted in the Petrov-Galerkin projection, is proposed. Upon convergence, this algorithm satisfies three of the four necessary conditions for optimality. An illustrative numerical example is presented to demonstrate the accuracy of the proposed algorithm within the specified time interval.
\section{Literature Review}
In this section, we will briefly discuss two of the most relevant MOR algorithms for LQO systems in the context of the problem under consideration. The first algorithm is the $\mathcal{H}_2$-optimal MOR method \citep{reiter2024h2}, and the second is the TLBT \citep{song2024balanced}.
\subsection{$\mathcal{H}_2$-optimal MOR Algorithm (HOMORA) \citep{reiter2024h2}}
Let us define the matrices $\tilde{P}$, $\hat{P}$, $\tilde{Y}$, $\hat{Y}$, $\tilde{Z}$, $\hat{Z}$, $\tilde{Q}$, and $\hat{Q}$ that solve the following set of linear matrix equations:
\begin{align}
A\tilde{P}+\tilde{P}\hat{A}^T+B\hat{B}^T=0,\nonumber\\
\hat{A}\hat{P}+\hat{P}\hat{A}^T+\hat{B}\hat{B}^T=0,\nonumber\\
A^T\tilde{Y}+\tilde{Y}\hat{A}+C^T\hat{C}=0,\nonumber\\
\hat{A}^T\hat{Y}+\hat{Y}\hat{A}+\hat{C}^T\hat{C}=0,\nonumber\\
A^T\tilde{Z}+\tilde{Z}\hat{A}+\sum_{i=1}^{p}M_i\tilde{P}\hat{M_i}=0,\nonumber\\
\hat{A}^T\hat{Z}+\hat{Z}\hat{A}+\sum_{i=1}^{p}\hat{M_i}\hat{P}\hat{M_i}=0,\nonumber\\
A^T\tilde{Q}+\tilde{Q}\hat{A}+C^T\hat{C}+\sum_{i=1}^{p}M_i\tilde{P}\hat{M_i}=0,\nonumber\\
\hat{A}^T\hat{Q}+\hat{Q}\hat{A}+\hat{C}^T\hat{C}+\sum_{i=1}^{p}\hat{M_i}\hat{P}\hat{M_i}=0.\nonumber
\end{align}
According to \citep{reiter2024h2}, the necessary conditions for the local optimum of the (squared) $\mathcal{H}_2$-norm of the error denoted as $||H - \hat{H}||_{\mathcal{H}_2}^2$, are given by the following set of equations:
\begin{align}
-(\tilde{Y}+2\tilde{Z})^T\tilde{P}+(\hat{Y}+2\hat{Z})\hat{P}&=0,\label{op01}\\
-\tilde{P}^TM_i\tilde{P}+\hat{P}\hat{M_i}\hat{P}&=0,\label{op02}\\
-(\tilde{Y}+2\tilde{Z})^TB+(\hat{Y}+2\hat{Z})\hat{B}&=0,\label{op03}\\
-C\tilde{P}+\hat{C}\hat{P}&=0.\label{op04}
\end{align}
Furthermore, it is shown that these optimality conditions can be achieved by setting the projection matrices as $\hat{V} = \tilde{P}$ and $\hat{W} = (\tilde{Y} + 2\tilde{Z})\big(\tilde{P}^T(\tilde{Y} + 2\tilde{Z})\big)^{-1}$. Starting with an initial guess of the ROM, the projection matrices are iteratively updated until convergence is reached, at which point the optimality conditions (\ref{op01})-(\ref{op04}) are satisfied.
\subsection{Time-limited Balanced Truncation (TLBT) \citep{song2024balanced}}
TLBT constructs the ROM by identifying and truncating the states that have minimal contribution to the input-output energy transfer within the desired time interval $[0, \tau]$ seconds. This is achieved by first constructing a time-limited balanced realization using time-limited Gramians, and then truncating the states that correspond to the smallest time-limited Hankel singular values.

The time-limited controllability Gramian $P_\tau$ within the desired time interval $[0,\tau]$ seconds is defined as
\begin{align}
P_\tau=\int_{0}^{\tau}e^{At}BB^Te^{A^Tt}dt.\nonumber
\end{align}
By defining $S_\tau$ as $S_\tau = e^{A\tau}$, $P_\tau$ can be found by solving the following Lyapunov equation:
\begin{align}
AP_\tau+P_\tau A^T+BB^T-S_{\tau}BB^TS_{\tau}^T&=0.\nonumber
\end{align}
The time-limited observability Gramian $Q_\tau$ within the time interval $[0,\tau]$ sec is defined as the sum of $Y_\tau$ and $Z_\tau$ where $Y_\tau$ and $Z_\tau$ are given by
\begin{align}
Y_{\tau}&=\int_{0}^{\tau}e^{A^Tt}C^TCe^{At}dt,\nonumber\\
Z_{\tau}&=\int_{0}^{\tau}e^{A^Tt_1}\Bigg(\sum_{i=1}^{p}M_i\Big(\int_{0}^{\tau}e^{At_2}BB^Te^{A^Tt_2}dt_2\Big)M_i\Bigg)e^{At_1}dt_1.\nonumber
\end{align} $Y_{\tau}$, $Z_{\tau}$, and $Q_\tau$ can be computed through the solution of following Lyapunov equations:
\begin{align}
A^TY_{\tau}+Y_{\tau} A+C^TC-S_{\tau}^TC^TCS_{\tau}&=0,\nonumber\\
A^TZ_{\tau}+Z_{\tau} A+\sum_{i=1}^{p}\big(M_iP_\tau M_i-S_{\tau}^TM_iP_\tau M_iS_{\tau}\big)&=0,\nonumber\\
A^TQ_\tau+Q_\tau A+C^TC-S_{\tau}^TC^TCS_{\tau}+\sum_{i=1}^{p}\big(M_iP_\tau M_i-S_{\tau}^TM_iP_\tau M_iS_{\tau}\big)&=0.\nonumber
\end{align}
The time-limited Hankel singular values, denoted by $\sigma_i$, are computed as the square root of the eigenvalues of the product of matrices $P_\tau$ and $Q_\tau$:
\begin{align}
\sigma_i=\sqrt{\lambda_i(P_\tau Q_\tau)}\hspace*{0.5cm}\textnormal{for} \hspace*{0.5cm}i=1,\cdots,N,\nonumber
\end{align}wherein $\lambda_i(\cdot)$ represents the eigenvalues. The projection matrices in TLBT are determined such that the transformed matrices $\hat{W}^TP_\tau \hat{W}$ and $\hat{V}^TQ_\tau \hat{V}$ become diagonal matrices with the $n$ largest time-limited Hankel singular values of $H$, i.e., $\hat{W}^TP_\tau \hat{W}=\hat{V}^TQ_\tau \hat{V}=diag(\sigma_1,\cdots,\sigma_n)$.
\section{Main Work}
This section introduces the time-limited $\mathcal{H}_2$ norm and its connection to the time-limited observability Gramian. Subsequently, necessary conditions for the local optimality of the squared time-limited $\mathcal{H}_2$ norm of the error are derived. Based on these optimality conditions, a Petrov-Galerkin projection-based iterative algorithm is proposed that satisfies three of the four optimality conditions. The challenges associated with fulfilling the remaining optimality condition within the Petrov-Galerkin projection framework are also discussed. Finally, the computational efficiency of the proposed algorithm is briefly analyzed.
\subsection{$\mathcal{H}_{2,\tau}$ norm Definition}
The classical $\mathcal{H}_2$ norm for LQO systems is defined in the time domain as:
\begin{align}
||H||_{\mathcal{H}_2}&=\Bigg[trace\Big(\int_{0}^{\infty}h_1^T(t)h_1(t)dt\nonumber\\
&\hspace*{1.5cm}+\int_{0}^{\infty}\int_{0}^{\infty}\sum_{i=1}^{p}h_{2,i}^T(t_1,t_2)h_{2,i}(t_1,t_2)dt_1dt_2\Big)\Bigg]^{-\frac{1}{2}},\nonumber
\end{align}where $h_1(t)=Ce^{At}B$ and $h_{2,i}(t_1,t_2)=B^Te^{A^Tt_1}M_ie^{At_2}B$. The $\mathcal{H}_2$ norm measures the output power in response to unit white noise over the entire time horizon. However, for the problem at hand, we focus on the output power within a specific time interval, leading to the concept of the time-limited $\mathcal{H}_2$ norm.
\begin{definition}
The time-limited $\mathcal{H}_2$ norm of the LQO system within the time interval $[0,\tau]$ sec is defined as:
\begin{align}
||H||_{\mathcal{H}_{2,\tau}}&=\Big[trace\Big(\int_{0}^{\tau}h_1^T(t)h_1(t)dt\nonumber\\
&\hspace*{1.5cm}+\int_{0}^{\tau}\int_{0}^{\tau}\sum_{i=1}^{p}h_{2,i}^T(t_1,t_2)h_{2,i}(t_1,t_2)dt_1dt_2\Big)\Big]^{-\frac{1}{2}}.\nonumber
\end{align}
\end{definition}
\begin{proposition}
The $\mathcal{H}_{2,\tau}$ norm is related to the time-limited observability Gramian $Q_\tau$ as follows:
\begin{align}
||H||_{\mathcal{H}_{2,\tau}}=\sqrt{trace(B^TQ_\tau B)}.\nonumber
\end{align}
\end{proposition}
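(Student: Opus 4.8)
The plan is to expand $\|H\|_{\mathcal{H}_{2,\tau}}^2$ directly from its definition, substitute the closed forms $h_1(t)=Ce^{At}B$ and $h_{2,i}(t_1,t_2)=B^Te^{A^Tt_1}M_ie^{At_2}B$, and then recover the time-limited Gramians $Y_\tau$ and $Z_\tau$ by pulling the $B$-dependent factors to the outside and exploiting the cyclic invariance of the trace together with Fubini's theorem. All integrands are continuous matrix-valued functions on the compact interval $[0,\tau]$ (no convergence issue arises because the horizon is finite), so interchanging trace and integration and splitting the iterated integral are both legitimate, and the two summands may be handled separately. Throughout I assume, without loss of generality, that each $M_i$ is symmetric; if not, one replaces $M_i$ by $\tfrac12(M_i+M_i^T)$, which leaves every $x^TM_ix$ unchanged.

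For the linear part, $h_1^T(t)h_1(t)=B^Te^{A^Tt}C^TCe^{At}B$, so
\begin{align}
\int_0^\tau h_1^T(t)h_1(t)\,dt = B^T\Big(\int_0^\tau e^{A^Tt}C^TCe^{At}\,dt\Big)B = B^TY_\tau B\nonumber
\end{align}
by the definition of $Y_\tau$. For the quadratic part, using $M_i=M_i^T$ gives $h_{2,i}^T(t_1,t_2)=B^Te^{A^Tt_2}M_ie^{At_1}B$, hence
\begin{align}
trace\big(h_{2,i}^T(t_1,t_2)h_{2,i}(t_1,t_2)\big) = trace\big(B^Te^{A^Tt_2}M_i\,e^{At_1}BB^Te^{A^Tt_1}\,M_ie^{At_2}B\big).\nonumber
\end{align}
Integrating first in $t_1$, the only $t_1$-dependent block is $e^{At_1}BB^Te^{A^Tt_1}$, whose integral over $[0,\tau]$ is $P_\tau$; what remains is $trace\big(B^Te^{A^Tt_2}M_iP_\tau M_ie^{At_2}B\big)$. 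Integrating this in $t_2$ and summing over $i$ yields
\begin{align}
\sum_{i=1}^p\int_0^\tau\!\!\int_0^\tau trace\big(h_{2,i}^Th_{2,i}\big)\,dt_1\,dt_2 = trace\Big(B^T\Big(\int_0^\tau e^{A^Tt_1}\sum_{i=1}^p M_iP_\tau M_i\,e^{At_1}\,dt_1\Big)B\Big) = trace(B^TZ_\tau B)\nonumber
\end{align}
by the definition of $Z_\tau$. Adding the two contributions and invoking $Q_\tau=Y_\tau+Z_\tau$ gives $\|H\|_{\mathcal{H}_{2,\tau}}^2=trace\big(B^T(Y_\tau+Z_\tau)B\big)=trace(B^TQ_\tau B)$, and taking the square root completes the proof.

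The step I expect to require the most care is the bookkeeping in the quadratic term: one must move the $e^{At_1}(\cdot)e^{A^Tt_1}$ block into the center so that $P_\tau$ materializes while keeping the outer $B^T(\cdot)B$ factors intact, which is precisely where cyclicity of the trace is used, and one must check that the remaining $t_2$-block has the orientation $e^{A^Tt_2}(\cdot)e^{At_2}$ matching the definition of $Z_\tau$. Apart from this, the argument uses nothing beyond the standing Hurwitz assumptions, and even those are not strictly needed here since the time interval is finite.
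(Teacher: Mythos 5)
Your proof is correct and takes essentially the same route as the paper's: expand the squared norm from the definition, substitute $h_1$ and $h_{2,i}$, recognize $Y_\tau$ from the single integral and $Z_\tau$ from the iterated one (with $P_\tau$ materializing from the inner $t_1$-integral), and sum via $Q_\tau=Y_\tau+Z_\tau$. The only differences are cosmetic: you explicitly symmetrize $M_i$ (a point the paper leaves implicit), and your appeal to trace cyclicity is actually unnecessary since the $B^T(\cdot)B$ factors already sit at the outer ends of the product.
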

\begin{proof}
Observe that
\begin{align}
&trace\Big(\int_{0}^{\tau}h_1^T(t)h_1(t)dt\Big)\nonumber\\
&=trace\Big(B^T\Big[\int_{0}^{\tau}e^{A^Tt}C^TCe^{At}\Big]Bdt\Big)\nonumber\\
&=trace(B^TY_{\tau} B).\nonumber
\end{align}
Furthermore, notice that
\begin{align}
&trace\Big(\int_{0}^{\tau}\int_{0}^{\tau}\sum_{i=1}^{p}h_{2,i}^T(t_1,t_2)h_{2,i}(t_1,t_2)dt_1dt_2\Big)\nonumber\\
&=trace\Bigg(B^T\Big[\int_{0}^{\tau}e^{A^Tt_2}\Big(\sum_{i=1}^{p}M_i\Big(\int_{0}^{\tau}e^{At_1}BB^Te^{A^Tt_1}dt_1\Big)M_i\Big)e^{At_2}dt_2\Big]B\Bigg)\nonumber\\
&=trace(B^TZ_{\tau} B).\nonumber
\end{align}
Therefore, $||H||_{\mathcal{H}_{2,\tau}}=\sqrt{trace\big(B^T(Y_{\tau}+Z_{\tau})B\big)}=\sqrt{trace\big(B^T(Q_\tau)B\big)}$.
\end{proof}
\subsection{$\mathcal{H}_{2,\tau}$ Norm of the Error}
Let us define the error system $E=H-\hat{H}$ with the following state-space representation:
\begin{align}
 E:=
  \begin{cases}
   \dot{x_e}(t)=\begin{bmatrix}x(t)\\x_n(t)\end{bmatrix}=A_ex_e(t)+B_eu(t),\\
  y_e(t)=y(t)-y_n(t)=C_ex_e(t)+\begin{bmatrix}x_e(t)^TM_{e,1}x_e(t)\\\vdots\\x_e(t)^TM_{e,p}x_e(t)\end{bmatrix},
  \end{cases}\nonumber
\end{align}where
\begin{align}
A_e&=\begin{bmatrix}A&0\\0&\hat{A}\end{bmatrix},&B_e&=\begin{bmatrix}B\\\hat{B}\end{bmatrix},&M_{e,i}&=\begin{bmatrix}M_i&0\\0&-\hat{M_i}\end{bmatrix},&C_e&=\begin{bmatrix}C&-\hat{C}\end{bmatrix}.\label{partreal}
\end{align}
Let $S_{e,\tau}$ be defined as $S_{e,\tau}=e^{A_e\tau}$. The time-limited controllability Gramian $P_{e,\tau}$ and the time-limited observability Gramian $Q_{e,\tau}=Y_{e,\tau}+Z_{e,\tau}$ of the realization $(A_e,B_e,C_e,M_{e,1},\cdots,M_{e,p})$ can be obtained by solving the following Lyapunov equations:
\begin{align}
\hspace*{1cm}A_eP_{e,\tau}+P_{e,\tau} A_e^T&+  B_eB_e^T-S_{e,\tau}B_eB_e^TS_{e,\tau}^T=0,\nonumber\\
A_e^TY_{e,\tau}+Y_{e,\tau} A_e&+C_e^TC_e-S_{e,\tau}^TC_e^TC_eS_{e,\tau}=0,\nonumber\\
A_e^TZ_{e,\tau}+Z_{e,\tau} A_e&+\sum_{i=1}^{p}\big(M_{e,i}P_{e,\tau} M_{e,i}\nonumber\\
&\hspace*{0.5cm}-S_{e,\tau}^TM_{e,i}P_{e,\tau} M_{e,i}S_{e,\tau}\big)=0,\nonumber\\
A_e^TQ_{e,\tau}+Q_{e,\tau} A_e&+C_e^TC_e-S_{e,\tau}^TC_e^TC_eS_{e,\tau}\nonumber\\
&\hspace*{0.5cm}+\sum_{i=1}^{p}\big(M_{e,i}P_{e,\tau} M_{e,i}-S_{e,\tau}^TM_{e,i}P_{e,\tau} M_{e,i}S_{e,\tau}\big)=0.\nonumber
\end{align}
Partition $P_{e,\tau}$, $Y_{e,\tau}$, $Z_{e,\tau}$, and $Q_{e,\tau}$ according to (\ref{partreal}) as follows:
\begin{align}
P_{e,\tau}&=\begin{bmatrix}P_\tau&\tilde{P}_{\tau}\\\tilde{P}_{\tau}^T&\hat{P}_{\tau}\end{bmatrix},&
Y_{e,\tau}&=\begin{bmatrix}Y_{\tau}&-\tilde{Y}_{\tau}\\-\tilde{Y}_{\tau}^T&\hat{Y}_{\tau}\end{bmatrix},\nonumber\\
Z_{e,\tau}&=\begin{bmatrix}Z_{\tau}&-\tilde{Z}_{\tau}\\-\tilde{Z}_{\tau}^T&\hat{Z}_{\tau}\end{bmatrix},&
Q_{e,\tau}&=\begin{bmatrix}Q_\tau&-\tilde{Q}_{\tau}\\-\tilde{Q}_{\tau}^T&\hat{Q}_{\tau}\end{bmatrix}.\nonumber
\end{align}
By defining $\hat{S}_{\tau}=e^{\hat{A}\tau}$, it can be verified that the following linear matrix equations hold:
\begin{align}
A\tilde{P}_{\tau}+\tilde{P}_{\tau}\hat{A}^T&+ B\hat{B}^T-S_{\tau}B\hat{B}^T\hat{S}_{\tau}^T=0,\label{eq:24}\\
\hat{A}\hat{P}_{\tau}+\hat{P}_{\tau}\hat{A}^T&+\hat{B}\hat{B}^T-\hat{S}_{\tau}\hat{B}\hat{B}^T\hat{S}_{\tau}^T=0,\label{eq:25}\\
A^T\tilde{Y}_{\tau}+\tilde{Y}_{\tau}\hat{A}&+C^T\hat{C}-S_{\tau}^TC^T\hat{C}\hat{S}_{\tau}=0,\label{eq:26}\\
\hat{A}^T\hat{Y}_{\tau}+\hat{Y}_{\tau}\hat{A}&+\hat{C}^T\hat{C}-\hat{S}_{\tau}^T\hat{C}^T\hat{C}\hat{S}_{\tau}=0,\label{eq:27}\\
A^T\tilde{Z}_{\tau}+\tilde{Z}_{\tau}\hat{A}&+\sum_{i=1}^{p}\big(M_i\tilde{P}_{\tau}\hat{M_i}\nonumber\\
&-S_{\tau}^T M_i\tilde{P}_{\tau}\hat{M_i}\hat{S}_{\tau}\big)=0,\label{eq:28}\\
\hat{A}^T\hat{Z}_{\tau}+\hat{Z}_{\tau}\hat{A}&+\sum_{i=1}^{p}\big(\hat{M_i}\hat{P}_{\tau}\hat{M_i}\nonumber\\
&-\hat{S}_{\tau}^T\hat{M_i}\hat{P}_{\tau}\hat{M_i}\hat{S}_{\tau}\big)=0,\label{eq:29}\\
A^T\tilde{Q}_{\tau}+\tilde{Q}_{\tau}\hat{A}&+C^T\hat{C}-S_{\tau}^TC^T\hat{C}\hat{S}_{\tau}\nonumber\\
&+\sum_{i=1}^{p}\big(M_i\tilde{P}_{\tau}\hat{M_i}-S_{\tau}^TM_i\tilde{P}_{\tau}\hat{M_i}\hat{S}_{\tau}\big)=0,\label{eq:30}\\
\hat{A}^T\hat{Q}_{\tau}+\hat{Q}_{\tau}\hat{A}&+\hat{C}^T\hat{C}-\hat{S}_{\tau}^T\hat{C}^T\hat{C}\hat{S}_{\tau}\nonumber\\
&+\sum_{i=1}^{p}\big(\hat{M_i}\hat{P}_{\tau}\hat{M_i}-\hat{S}_{\tau}^T\hat{M_i}\hat{P}_{\tau}\hat{M_i}\hat{S}_{\tau}\big)=0.\label{eq:31}
\end{align}
Consequently, the $\mathcal{H}_{2,\tau}$ norm of $E$ can be expressed as:
\begin{align}
||E||_{\mathcal{H}_{2,\tau}}&=\sqrt{trace(B_e^TQ_{e,\tau}B_e)}\nonumber\\
&=\sqrt{trace(B^TQ_\tau B-2B^T\tilde{Q}_{\tau}\hat{B}+\hat{B}^T\hat{Q}_{\tau}\hat{B})}.\nonumber
\end{align}
\begin{corollary}
$||E||_{\mathcal{H}_{2,\tau}}^2=||H||_{\mathcal{H}_{2,\tau}}^2-2\langle  H , \hat{H} \rangle_{\mathcal{H}_{2,\tau}}+||\hat{H}||_{\mathcal{H}_{2,\tau}}^2,$ wherein $\langle  H , \hat{H} \rangle_{\mathcal{H}_{2,\tau}}$ represents the $\mathcal{H}_{2,\tau}$ inner product of $H$ and $\hat{H}$.
\end{corollary}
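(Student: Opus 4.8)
The plan is to read the result off the expression for $||E||_{\mathcal{H}_{2,\tau}}^2$ already derived above. Substituting the block partitions of $B_e$ and $Q_{e,\tau}$ into $trace(B_e^TQ_{e,\tau}B_e)$ and using $trace(\hat{B}^T\tilde{Q}_{\tau}^TB)=trace(B^T\tilde{Q}_{\tau}\hat{B})$ gives
\[
||E||_{\mathcal{H}_{2,\tau}}^2 = trace(B^TQ_\tau B) - 2\,trace(B^T\tilde{Q}_{\tau}\hat{B}) + trace(\hat{B}^T\hat{Q}_{\tau}\hat{B}),
\]
so it suffices to identify each of the three summands. The first is $||H||_{\mathcal{H}_{2,\tau}}^2$ by the Proposition. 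For the third, I would note that \eqref{eq:27}, \eqref{eq:29} and \eqref{eq:31} are exactly the defining Lyapunov equations of the Proposition with $(A,B,C,M_i)$ replaced by $(\hat{A},\hat{B},\hat{C},\hat{M_i})$; hence $\hat{Q}_\tau=\hat{Y}_\tau+\hat{Z}_\tau$ is the time-limited observability Gramian of $\hat{H}$, and the Proposition applied to $\hat{H}$ yields $trace(\hat{B}^T\hat{Q}_\tau\hat{B})=||\hat{H}||_{\mathcal{H}_{2,\tau}}^2$.

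It then remains to interpret the cross term as the inner product. Define the time-limited $\mathcal{H}_2$ inner product, consistently with the Definition, by
\[
\langle H,\hat{H}\rangle_{\mathcal{H}_{2,\tau}} = trace\Big(\int_0^\tau h_1^T(t)\hat{h}_1(t)\,dt + \int_0^\tau\!\!\int_0^\tau\sum_{i=1}^p h_{2,i}^T(t_1,t_2)\hat{h}_{2,i}(t_1,t_2)\,dt_1\,dt_2\Big),
\]
where $\hat{h}_1(t)=\hat{C}e^{\hat{A}t}\hat{B}$ and $\hat{h}_{2,i}(t_1,t_2)=\hat{B}^Te^{\hat{A}^Tt_1}\hat{M_i}e^{\hat{A}t_2}\hat{B}$, with $M_i$ and $\hat{M_i}$ taken symmetric without loss of generality. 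I would then repeat the computation in the proof of the Proposition: using cyclicity of the trace and pulling $B$ and $\hat{B}$ outside, the linear part equals $trace\big(B^T[\int_0^\tau e^{A^Tt}C^T\hat{C}e^{\hat{A}t}\,dt]\hat{B}\big)$ and the quadratic part equals $trace\big(B^T[\sum_i\int_0^\tau e^{A^Tt_2}M_i(\int_0^\tau e^{At_1}B\hat{B}^Te^{\hat{A}^Tt_1}\,dt_1)\hat{M_i}e^{\hat{A}t_2}\,dt_2]\hat{B}\big)$. A fundamental-theorem-of-calculus argument applied to $\frac{d}{dt}\big(e^{A^Tt}(\cdot)e^{\hat{A}t}\big)$ shows that $\int_0^\tau e^{At_1}B\hat{B}^Te^{\hat{A}^Tt_1}\,dt_1$ solves \eqref{eq:24} and hence equals $\tilde{P}_\tau$; that $\int_0^\tau e^{A^Tt}C^T\hat{C}e^{\hat{A}t}\,dt$ solves \eqref{eq:26} and equals $\tilde{Y}_\tau$; and that $\sum_i\int_0^\tau e^{A^Tt}M_i\tilde{P}_\tau\hat{M_i}e^{\hat{A}t}\,dt$ solves \eqref{eq:28} and equals $\tilde{Z}_\tau$. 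Since the partitioning of $Q_{e,\tau}=Y_{e,\tau}+Z_{e,\tau}$ gives $\tilde{Q}_\tau=\tilde{Y}_\tau+\tilde{Z}_\tau$, I conclude $\langle H,\hat{H}\rangle_{\mathcal{H}_{2,\tau}}=trace(B^T\tilde{Q}_\tau\hat{B})$, and substituting the three identifications into the displayed expansion of $||E||_{\mathcal{H}_{2,\tau}}^2$ finishes the proof.

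I do not expect a genuine obstacle here, since the argument is essentially the same trace/integration manipulation already carried out for the Proposition; the points requiring care are (i) confirming the sign conventions of the off-diagonal blocks of $Y_{e,\tau}$, $Z_{e,\tau}$, $Q_{e,\tau}$ so that $\tilde{Q}_\tau=\tilde{Y}_\tau+\tilde{Z}_\tau$ is consistent with adding \eqref{eq:26} and \eqref{eq:28} to obtain \eqref{eq:30}, and (ii) invoking uniqueness of the solutions of the Sylvester-type equations \eqref{eq:24}, \eqref{eq:26}, \eqref{eq:28}, which holds because $A$ and $\hat{A}$ are Hurwitz so $\lambda_i(A)+\lambda_j(\hat{A})\neq 0$, thereby legitimizing the integral representations. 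The symmetry assumption on $M_i$ and $\hat{M_i}$ is only for notational convenience and can be removed by the usual symmetrization $M_i\mapsto\tfrac12(M_i+M_i^T)$.
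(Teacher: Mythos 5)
Your proposal is correct and follows essentially the same route as the paper: expand $trace(B_e^TQ_{e,\tau}B_e)$ blockwise, identify the diagonal terms with $||H||_{\mathcal{H}_{2,\tau}}^2$ and $||\hat{H}||_{\mathcal{H}_{2,\tau}}^2$, and show the cross term equals $trace\big(B^T(\tilde{Y}_\tau+\tilde{Z}_\tau)\hat{B}\big)=trace(B^T\tilde{Q}_\tau\hat{B})$ by the same trace manipulations and the integral representations of $\tilde{Y}_\tau$ and $\tilde{Z}_\tau$ as solutions of \eqref{eq:26} and \eqref{eq:28}. Your added remarks on uniqueness of the Sylvester solutions (via the Hurwitz assumption) and the sign conventions of the off-diagonal blocks are sound but not points the paper dwells on.
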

\begin{proof}
The first and last terms in the expression for $||E||_{\mathcal{H}_{2,\tau}}^2$ are straightforward. Our focus lies in showing that the middle term corresponds to the $\mathcal{H}_{2,\tau}$ inner product of $H$ and $\hat{H}$. By expanding the inner product definition, we can express it as:
\begin{align}
\langle  H , \hat{H} \rangle_{\mathcal{H}_{2,\tau}}&=\int_{0}^{\tau}trace\big(h_1^T(t)\hat{h}_1(t)dt\big)\nonumber\\
&\hspace*{1cm}+trace\Big(\int_{0}^{\tau}\int_{0}^{\tau}\sum_{i=1}^{p}\big(h_{2,i}^T(t_1,t_2)\hat{h}_{2,i}(t_1,t_2)\big)dt_1dt_2\Big),\nonumber
\end{align}wherein $\hat{h}_1(t)=\hat{C}e^{\hat{A}t}\hat{B}$ and $\hat{h}_{2,i}(t_1,t_2)=\hat{B}^Te^{\hat{A}^Tt_1}\hat{M}_ie^{\hat{A}t_2}\hat{B}$.
Furthermore,
\begin{align}
trace\Big(\int_{0}^{\tau}h_1^T(t)\hat{h}_1(t)dt\Big)&=\nonumber\\
trace\Big(B^T\big(\int_{0}^{\tau}&e^{A^Tt}C^T\hat{C}e^{\hat{A}}dt\big)\hat{B}\Big),\nonumber\\
trace\Big(\int_{0}^{\tau}\int_{0}^{\tau}\sum_{i=1}^{p}h_{2,i}^T(t_1,t_2)\hat{h}_{2,i}(t_1,t_2)dt_1dt_2\Big)&=\nonumber\\
trace\Bigg(B^T\Big[\int_{0}^{\tau}e^{A^Tt_1}\Big(\sum_{i=1}^{p}M_i\big(\int_{0}^{\tau}&e^{At_2}B\hat{B}^Te^{\hat{A}^Tt_2}dt_2\big)\hat{M}_i\Big)e^{\hat{A}t_1}\Big]\hat{B}\Bigg),\nonumber
\end{align}
The Sylvester equations (\ref{eq:26}) and (\ref{eq:28}) can be solved by computing the following integrals:
\begin{align}
\tilde{Y}_{\tau}&=\int_{0}^{\tau}e^{A^Tt}C^T\hat{C}e^{\hat{A}t}dt,\nonumber\\
\tilde{Z}_{\tau}&=\int_{0}^{\tau}e^{A^Tt_1}\Bigg(\sum_{i=1}^{p}M_i\Big(\int_{0}^{\tau}e^{At_2}B\hat{B}^Te^{\hat{A}^Tt_2}dt_2\Big)\hat{M}_i\Bigg)e^{\hat{A}t_1}dt_1;\nonumber
\end{align}cf. \citep{benner2021gramians,song2024balanced} Consequently, the $\mathcal{H}_{2,\tau}$ inner product between $H$ and $\hat{H}$ can be expressed as
\begin{align}
\langle  H , \hat{H} \rangle_{\mathcal{H}_{2,\tau}}=trace\big(B^T(\tilde{Y}_\tau+\tilde{Z}_\tau)\hat{B}\big)=trace(B^T\tilde{Q}_\tau\hat{B}).\nonumber
\end{align}
\end{proof}
\subsection{Optimality Conditions}
This subsection establishes the necessary conditions for minimizing the squared $\mathcal{H}_{2,\tau}$-norm of the error. To achieve this, several auxiliary variables are introduced. Specifically, $\bar{Z}_{\tau}$ and $\bar{Z}_{n,\tau}$ are defined as the solutions to the following linear matrix equations:
\begin{align}
A^T\bar{Z}_{\tau}+\bar{Z}_{\tau}\hat{A}+\sum_{i=1}^{p}M_i\tilde{P}_{\tau}\hat{M_i}=0,\nonumber\\
\hat{A}^T\bar{Z}_{n,\tau}+\bar{Z}_{n,\tau}\hat{A}+\sum_{i=1}^{p}\hat{M_i}\hat{P}_{\tau}\hat{M_i}=0.\nonumber
\end{align}
Note that $\tilde{P}_{\tau}$, $\hat{P}_{\tau}$, $\tilde{Z}_{\tau}$, and $\hat{Z}_{\tau}$ are truncated versions of $\tilde{P}$, $\hat{P}$, $\bar{Z}_{\tau}$, and $\bar{Z}_{n,\tau}$, respectively, with integration limits restricted to $[0,\tau]$. Additionally, we define $\tilde{P}_{12}$, $\tilde{P}_n$, $\tilde{Z}_{12}$, and $\tilde{Z}_n$ as follows:
\begin{align}
\tilde{P}_{12}&=\tilde{P}\Big|_{\tau}^{\infty}=\tilde{P}\Big|_{0}^{\infty}-\tilde{P}\Big|_{0}^{\tau}=\tilde{P}-\tilde{P}_{\tau},\label{nst46}\\
\tilde{P}_n&=\hat{P}\Big|_{\tau}^{\infty}=\hat{P}\Big|_{0}^{\infty}-\hat{P}\Big|_{0}^{\tau}=\hat{P}-\hat{P}_{\tau},\label{nst47}\\
\tilde{Z}_{12}&=\bar{Z}_{\tau}\Big|_{\tau}^{\infty}=\bar{Z}_{\tau}\Big|_{0}^{\infty}-\bar{Z}_{\tau}\Big|_{0}^{\tau}=\bar{Z}_{\tau}-\tilde{Z}_{\tau},\label{nst48}\\
\tilde{Z}_n&=\bar{Z}_{n,\tau}\Big|_{\tau}^{\infty}=\bar{Z}_{n,\tau}\Big|_{0}^{\infty}-\bar{Z}_{n,\tau}\Big|_{0}^{\tau}=\bar{Z}_{n,\tau}-\hat{Z}_{\tau}.\label{nst49}
\end{align}
Furthermore, let us define $V$, $W$, and $L_\tau$ as
\begin{align}
V&=\hat{B}B^TS_{\tau}^T\bar{Z}_{\tau}-\hat{B}\hat{B}^T\hat{S}_{\tau}^T\bar{Z}_{n,\tau}+\tilde{P}^TS_{\tau}^TC^T\hat{C}-\hat{P}\hat{S}_{\tau}^T\hat{C}\hat{C}^T\nonumber\\
&\hspace*{1.5cm}+\sum_{i=1}^{p}\big(\tilde{P}^TS_{\tau}^TM_i\tilde{P}_{\tau}\hat{M_i}-\hat{P}\hat{S}_{\tau}^T\hat{M_i}\hat{P}_{\tau}\hat{M_i}\big)\nonumber\\
W&=\mathcal{L}(\hat{A},V)=\int_{0}^{\tau}e^{\hat{A}(\tau-t)}Ve^{(\hat{A}+V)t}dt,\nonumber\\
L_\tau&=-\tilde{Q}_{\tau}^T\tilde{P}_{12}+\hat{Q}_\tau\tilde{P}_n-\tilde{Z}_{12}^T\tilde{P}_\tau+\tilde{Z}_n\tilde{P}_\tau+W^T,\nonumber
\end{align}where $\mathcal{L}(\hat{A},V)$ represents the Fr\'{e}chet derivative of the matrix exponential $e^{\hat{A}\tau}$ in the direction of matrix $V$.

We now present the necessary conditions for achieving a local minimum of the squared $\mathcal{H}_{2,\tau}$-norm of the error.
\begin{theorem}\label{th1}
A local minimum of $||E||_{\mathcal{H}_{2,\tau}}^2$ must satisfy the following necessary conditions:
\begin{align}
-(\tilde{Y}_{\tau}+2\tilde{Z}_{\tau})^T\tilde{P}_{\tau}+(\hat{Y}_{\tau}+2\hat{Z}_{\tau})\hat{P}_{\tau}+L_\tau&=0,\label{op1}\\
-\tilde{P}_{\tau}^TM_i\tilde{P}_{\tau}+\hat{P}_{\tau}\hat{M_i}\hat{P}_{\tau}&=0,\label{op2}\\
-(\tilde{Y}_{\tau}+2\tilde{Z}_{\tau})^TB+(\hat{Y}_{\tau}+2\hat{Z}_{\tau})\hat{B}&=0,\label{op3}\\
-C\tilde{P}_{\tau}+\hat{C}\hat{P}_{\tau}&=0.\label{op4}
\end{align}
\end{theorem}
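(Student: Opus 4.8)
The plan is to compute the gradient of $||E||_{\mathcal{H}_{2,\tau}}^2$ with respect to the reduced matrices $(\hat{A},\hat{B},\hat{C},\hat{M_1},\dots,\hat{M_p})$ and to read off the first-order stationarity conditions. By the Corollary,
\begin{align}
||E||_{\mathcal{H}_{2,\tau}}^2=||H||_{\mathcal{H}_{2,\tau}}^2-2\,trace(B^T\tilde{Q}_{\tau}\hat{B})+trace(\hat{B}^T\hat{Q}_{\tau}\hat{B}),\nonumber
\end{align}
and since $||H||_{\mathcal{H}_{2,\tau}}^2=trace(B^TQ_\tau B)$ does not depend on the ROM, only the functional $J:=-2\,trace(B^T\tilde{Q}_{\tau}\hat{B})+trace(\hat{B}^T\hat{Q}_{\tau}\hat{B})$ must be differentiated. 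I would first split $\tilde{Q}_{\tau}=\tilde{Y}_{\tau}+\tilde{Z}_{\tau}$ and $\hat{Q}_{\tau}=\hat{Y}_{\tau}+\hat{Z}_{\tau}$ so that $J$ becomes a function of the solutions of the Sylvester/Lyapunov equations (\ref{eq:24})--(\ref{eq:29}) (together with the auxiliary equations for $\bar{Z}_{\tau}$, $\bar{Z}_{n,\tau}$ and the tails (\ref{nst46})--(\ref{nst49})). Since these equations are themselves constrained by the ROM, the derivation follows the standard Lagrangian / adjoint-sensitivity route: adjoin each defining matrix equation to $J$ with a matrix multiplier, take the Fr\'{e}chet derivative with respect to each reduced matrix, set it to zero, and eliminate the multipliers by identifying them with the already-defined Gramian-type matrices. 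This mirrors the derivation of the unlimited conditions (\ref{op01})--(\ref{op04}) in \citep{reiter2024h2}, which serves as the template.

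For the derivatives with respect to $\hat{C}$, $\hat{M_i}$ and $\hat{B}$ the computation is structurally identical to the unlimited case: neither $S_{\tau}=e^{A\tau}$ (which contains no ROM data) nor $\hat{S}_{\tau}=e^{\hat{A}\tau}$ depends on $\hat{C}$, $\hat{M_i}$ or $\hat{B}$, so the matrix exponentials pass through the differentiation untouched and one obtains the $\tau$-limited analogues of Wilson's conditions. The only bookkeeping point is that $\hat{Z}_{\tau}$ (resp.\ $\tilde{Z}_{\tau}$) enters $J$ both directly through $\hat{Q}_{\tau}$ and indirectly through the coupling of its defining equation (\ref{eq:29}) (resp.\ (\ref{eq:28})) to $\hat{P}_{\tau}$ (resp.\ $\tilde{P}_{\tau}$); together with $M_i=M_i^T$ this produces the combinations $\hat{Y}_{\tau}+2\hat{Z}_{\tau}$ and $\tilde{Y}_{\tau}+2\tilde{Z}_{\tau}$. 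Collecting the coefficients of $\delta\hat{C}$, $\delta\hat{M_i}$ and $\delta\hat{B}$ then yields (\ref{op4}), (\ref{op2}) and (\ref{op3}) respectively.

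The substantive condition is $\partial J/\partial\hat{A}=0$, because $\hat{A}$ is the only reduced matrix on which $\hat{S}_{\tau}=e^{\hat{A}\tau}$ depends. I would organize this derivative into two groups. The first group contains every term not involving a matrix exponential; here the tail decompositions $\tilde{P}_{\tau}=\tilde{P}-\tilde{P}_{12}$, $\hat{P}_{\tau}=\hat{P}-\tilde{P}_n$, $\bar{Z}_{\tau}=\tilde{Z}_{\tau}+\tilde{Z}_{12}$, $\bar{Z}_{n,\tau}=\hat{Z}_{\tau}+\tilde{Z}_n$ from (\ref{nst46})--(\ref{nst49}) let the infinite-horizon pieces be peeled off, and what remains organizes into $-(\tilde{Y}_{\tau}+2\tilde{Z}_{\tau})^T\tilde{P}_{\tau}+(\hat{Y}_{\tau}+2\hat{Z}_{\tau})\hat{P}_{\tau}$ plus the tail corrections $-\tilde{Q}_{\tau}^T\tilde{P}_{12}+\hat{Q}_{\tau}\tilde{P}_n-\tilde{Z}_{12}^T\tilde{P}_{\tau}+\tilde{Z}_n\tilde{P}_{\tau}$. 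The second group collects every term in which the Fr\'{e}chet derivative $\mathcal{L}(\hat{A},\delta\hat{A})$ of $e^{\hat{A}\tau}$ appears; these arise from differentiating the terms $-\hat{S}_{\tau}\hat{B}\hat{B}^T\hat{S}_{\tau}^T$, $-\hat{S}_{\tau}^T\hat{C}^T\hat{C}\hat{S}_{\tau}$, $-\hat{S}_{\tau}^T\hat{M_i}\hat{P}_{\tau}\hat{M_i}\hat{S}_{\tau}$, $-S_{\tau}B\hat{B}^T\hat{S}_{\tau}^T$, $-S_{\tau}^TC^T\hat{C}\hat{S}_{\tau}$ and $-S_{\tau}^TM_i\tilde{P}_{\tau}\hat{M_i}\hat{S}_{\tau}$ in (\ref{eq:24})--(\ref{eq:31}). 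After pairing each such term with its adjoint variable and using linearity of $\mathcal{L}(\hat{A},\cdot)$, all of them combine into a single direction matrix, which is exactly the $V$ of the statement; the self-adjointness of the matrix-exponential derivative, $trace\big(G^T\mathcal{L}(\hat{A},\delta\hat{A})\big)=trace\big(\mathcal{L}(\hat{A}^T,G)^T\delta\hat{A}\big)$, then converts this group into $W^T$. Adding the two groups and using the definition of $L_{\tau}$ produces (\ref{op1}).

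The main obstacle is precisely this last computation: one must enumerate, with the correct signs and transposes, every tail term and every $e^{\hat{A}\tau}$-derivative term generated when $\hat{A}$ is perturbed. The nested chain $\hat{A}\mapsto(\tilde{P}_{\tau},\hat{P}_{\tau})\mapsto(\tilde{Z}_{\tau},\hat{Z}_{\tau})$ is what makes that list long, and it is exactly what forces the introduction of the auxiliary matrices $\bar{Z}_{\tau}$, $\bar{Z}_{n,\tau}$ and the tails $\tilde{P}_{12},\tilde{P}_n,\tilde{Z}_{12},\tilde{Z}_n$; once the list is complete, verifying that the exponential-derivative contributions genuinely collapse to the compact matrix $V$ — hence to the single closed-form term $W^T$ inside $L_{\tau}$ — is the delicate part, the rest being routine trace algebra. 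It is worth noting already here that, in contrast to the unlimited case, $L_{\tau}\neq0$ in general, which is the formal reason that a Petrov-Galerkin projection cannot make (\ref{op1}) hold exactly.
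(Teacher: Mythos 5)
Your proposal follows essentially the same route as the paper's Appendix A: a first-order perturbation (adjoint/trace) analysis of $J=trace(-2B^T\tilde{Q}_{\tau}\hat{B}+\hat{B}^T\hat{Q}_{\tau}\hat{B})$ with respect to each reduced matrix, using the Sylvester equations to eliminate the sensitivities via the infinite-horizon Gramians and the auxiliary matrices $\bar{Z}_{\tau},\bar{Z}_{n,\tau}$, the tail decompositions (\ref{nst46})--(\ref{nst49}), and the Fr\'{e}chet-derivative duality $trace(V\Delta_{\hat{S}_{\tau}}^{\hat{A}})=trace(W\Delta_{\hat{A}})$ to collapse the exponential-derivative terms into $W^T$ inside $L_\tau$. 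The decomposition into the non-exponential group and the $\mathcal{L}(\hat{A},\cdot)$ group, and the observation that only the $\hat{A}$-gradient picks up the residual $L_\tau$, match the paper's argument, so this is the same proof in outline.
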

\begin{proof}
The proof of this theorem is tedious and lengthy and is therefore deferred to Appendix A.
\end{proof}
\subsection{Comparison with Local Optimum of $||E||_{\mathcal{H}_2}^2$}
This subsection draws a comparison between the necessary conditions for optimizing the standard $\mathcal{H}_2$-norm and the time-limited $\mathcal{H}_{2,\tau}$-norm of the error. To facilitate this comparison, we begin by mathematically expressing the standard $\mathcal{H}_2$-norm as presented in \citep{benner2021gramians,reiter2024h2}. The controllability Gramian, denoted as $P_e$, and the observability Gramian, denoted as $Q_e = Y_e + Z_e$, associated with the system realization $(A_e, B_e, C_e, M_{e,1}, \dots, M_{e,p})$ can be determined by solving the following Lyapunov equations:
\begin{align}
A_eP_e+P_e A_e^T+ B_eB_e^T&=0,\nonumber\\
A_e^TY_e+Y_e A_e+C_e^TC_e&=0,\nonumber\\
A_e^TZ_e+Z_e A_e+\sum_{i=1}^{p}M_{e,i}P_e M_{e,i}&=0,\nonumber\\
A_e^TQ_e+Q_e A_e+C_e^TC_e+\sum_{i=1}^{p}M_{e,i}P_eM_{e,i}&=0.\nonumber
\end{align}
Partitioning $P_e$, $Y_e$, $Z_e$, and $Q_e$ according to the structure outlined in (\ref{partreal}) yields:
\begin{align}
P_e&=\begin{bmatrix}P&\tilde{P}\\\tilde{P}^T&\hat{P}\end{bmatrix},&
Y_e&=\begin{bmatrix}Y&-\tilde{Y}\\-\tilde{Y}^T&\hat{Y}\end{bmatrix},\nonumber\\
Z_e&=\begin{bmatrix}Z&-\tilde{Z}\\-\tilde{Z}^T&\hat{Z}\end{bmatrix},&
Q_e&=\begin{bmatrix}Q&-\tilde{Q}\\-\tilde{Q}^T&\hat{Q}\end{bmatrix}.\nonumber
\end{align}
Consequently, the $\mathcal{H}_2$-norm of $E$ can be expressed as:
\begin{align}
||E||_{\mathcal{H}_2}&=\sqrt{trace(B_e^TQ_eB_e)}\nonumber\\
&=\sqrt{trace(B^TQ B-2B^T\tilde{Q}\hat{B}+\hat{B}^T\hat{Q}\hat{B})}.\nonumber
\end{align}

A comparison of the optimality conditions (\ref{op1})-(\ref{op4}) and (\ref{op01})-(\ref{op04}) reveals both similarities and distinct differences. By restricting the integration limits of $P_e$ and $Q_e$ to the $[0, \tau]$ second, the optimality conditions (\ref{op2})-(\ref{op4}) can be derived from their counterparts (\ref{op02})-(\ref{op04}). However, the optimality condition (\ref{op01}) does not reduce to condition (\ref{op1}) through this integration limit constraint.

The optimality conditions (\ref{op02})-(\ref{op04}) directly yield optimal expressions for $\hat{M_i}$, $\hat{B}$, and $\hat{C}$ as follows:
\begin{align}
\hat{M_i}&=\hat{P}^{-1}\tilde{P}^TM_i\tilde{P}\hat{P}^{-1},\label{ocm0}\\
\hat{B}&=(\hat{Y}+2\hat{Z})^{-1}(\tilde{Y}+2\tilde{Z})^TB,\label{obm0}\\
\hat{C}&=C\tilde{P}\hat{P}^{-1}.\label{occ0}
\end{align}
By imposing the time constraint through integration limits on $P_e$ and $Q_e$, these optimal expressions can be adapted for the time-limited case, resulting in Equations (\ref{ocm1})-(\ref{occ1}), consistent with the optimality conditions (\ref{op2})-(\ref{op4}).
\begin{align}
\hat{M_i}&=\hat{P}_{\tau}^{-1}\tilde{P}_{\tau}^TM_i\tilde{P}_{\tau}\hat{P}_{\tau}^{-1},\label{ocm1}\\
\hat{B}&=(\hat{Y}_{\tau}+2\hat{Z}_{\tau})^{-1}(\tilde{Y}_{\tau}+2\tilde{Z}_{\tau})^TB,\label{obm1}\\
\hat{C}&=C\tilde{P}_{\tau}\hat{P}_{\tau}^{-1}.\label{occ1}
\end{align}
The optimal projection matrices $\hat{V}$ and $\hat{W}$ for the standard $\mathcal{H}_2$ optimal model order reduction problem can be defined as:
\begin{align}
\hat{V}&=\tilde{P}\hat{P}^{-1},&\hat{W}&=(\tilde{Y}+2\tilde{Z})(\hat{Y}+2\hat{Z})^{-1}.\nonumber
\end{align}
Analogous definitions can be used for the time-limited case by replacing the original matrices with their time-limited counterparts, i.e.,
\begin{align}
\hat{V}&=\tilde{P}_{\tau}\hat{P}_{\tau}^{-1},&\hat{W}&=(\tilde{Y}_{\tau}+2\tilde{Z}_{\tau})(\hat{Y}_{\tau}+2\hat{Z}_{\tau})^{-1}.\nonumber
\end{align} While this approach enables optimal selections for $\hat{M_i}$, $\hat{B}$, and $\hat{C}$ as per Equations (\ref{ocm1})-(\ref{occ1}), it falls short in determining an optimal $\hat{A}$. Enforcing the Petrov-Galerkin condition, $\hat{W}^T\hat{V} = I$, ensures that the term $-(\tilde{Y}_{\tau}+2\tilde{Z}_{\tau})^T\tilde{P}_{\tau}+(\hat{Y}_{\tau}+2\hat{Z}_{\tau})\hat{P}_{\tau}$ becomes zero, but the residual term $L_\tau$ generally remains nonzero, indicating a deviation from the optimality condition (\ref{op1}). Consequently, achieving a local optimum for the time-limited case within the Petrov-Galerkin projection framework remains generally impossible. Nevertheless, this choice of projection matrices yields exact satisfaction of the optimality conditions (\ref{op2})-(\ref{op4}) and an approximate satisfaction the optimality condition (\ref{op1}).

Thus far, the optimal projection matrices $\hat{V}$ and $\hat{W}$ have been determined as $\hat{V} = \tilde{P}_{\tau}\hat{P}_{\tau}^{-1}$ and $\hat{W} = (\tilde{Y}_{\tau}+2\tilde{Z}_{\tau})(\hat{Y}_{\tau}+2\hat{Z}_{\tau})^{-1}$, respectively. However, it is important to note that these matrices depend on the ROM, $(\hat{A}, \hat{B}, \hat{C}, \hat{M}_1, ..., \hat{M}_p)$, which is yet unknown. Consequently, Equations (\ref{steq:2}) and (\ref{eq:24})-(\ref{eq:29}) form a coupled system of equations that can be represented as:
\begin{align}
(\hat{A},\hat{B},\hat{C},\hat{M_1},\cdots,\hat{M_p})&=f(\tilde{P}_{\tau},\hat{P}_{\tau},\tilde{Y}_{\tau},\hat{Y}_{\tau},\tilde{Z}_{\tau},\hat{Z}_{\tau}),\nonumber\\
(\tilde{P}_{\tau},\hat{P}_{\tau},\tilde{Y}_{\tau},\hat{Y}_{\tau},\tilde{Z}_{\tau},\hat{Z}_{\tau})&=g(\hat{A},\hat{B},\hat{C},\hat{M_1},\cdots,\hat{M_p}).\nonumber
\end{align}
The stationary points of the composite function
\begin{align}
(\hat{A},\hat{B},\hat{C},\hat{M_1},\cdots,\hat{M_p})=f\big(g(\hat{A},\hat{B},\hat{C},\hat{M_1},\cdots,\hat{M_p})\big)\nonumber
\end{align} satisfy the optimality conditions (\ref{op2})-(\ref{op4}). Moreover, by imposing the Petrov-Galerkin condition, $\hat{W}^T\hat{V} = I$, the optimality condition (\ref{op1}) is approximately fulfilled with a deviation quantified by $L_\tau$. In contrast, the classical $\mathcal{H}_2$-optimal MOR problem achieves exact satisfaction of all optimality conditions (\ref{op01})-(\ref{op04}) through the enforcement of the Petrov-Galerkin condition on the stationary points.

In standard $\mathcal{H}_2$-optimal MOR, it is established that selecting projection matrices as $\hat{V} = \tilde{P}$ and $\hat{W} = \tilde{Y} + 2\tilde{Z}$, rather than the previously considered forms, leads to $\hat{P} = I$ and $\hat{Y} + 2\hat{Z} = I$ at stationary points. Consequently, the combination of these projection matrices with the Petrov-Galerkin condition, $\hat{W}^T\hat{V} = I$, ensures satisfaction of all optimality conditions (\ref{op01})-(\ref{op04}). However, when applying the analogous approach to the time-limited case, with $\hat{V} = \tilde{P}_{\tau}$ and $\hat{W} = \tilde{Y}_{\tau} + 2\tilde{Z}_{\tau}$, and imposing the Petrov-Galerkin condition, the resulting ROM does not satisfy any optimality condition.
\begin{theorem}\label{th2}
If the conditions $\hat{W}^TS_{\tau}B = \hat{S}_{\tau}\hat{B}$, $CS_{\tau}\hat{V} = \hat{C}\hat{S}_{\tau}$, and $\hat{V}^TM_iS_{\tau}\hat{V} = \hat{M_i}\hat{S}_{\tau}$ hold, then selecting $\hat{V} = \tilde{P}_{\tau}$ and $\hat{W} = \tilde{Y}_{\tau} + 2\tilde{Z}_{\tau}$ with the Petrov-Galerkin condition $\hat{W}^T\hat{V} = I$ leads to $\hat{P}_{\tau} = I$ and $\hat{Y}_{\tau} + 2\hat{Z}_{\tau} = I$, consequently satisfying the optimality conditions (\ref{op1})-(\ref{op4}).
\end{theorem}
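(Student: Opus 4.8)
The plan is to run the usual fixed-point (IRKA-type) bookkeeping, but now carrying the $S_\tau$-weighted terms through the Lyapunov and Sylvester equations; the three hypotheses are precisely what makes those extra terms behave compatibly under projection. First I would multiply the Sylvester equation (\ref{eq:24}) for $\tilde P_\tau$ on the left by $\hat W^T$. Using $\hat W^TA\hat V=\hat A$, $\hat W^T\hat V=I$ (hence $\hat W^T\tilde P_\tau=I$), $\hat W^TB=\hat B$, and the hypothesis $\hat W^TS_\tau B=\hat S_\tau\hat B$, the projected equation collapses to $\hat A+\hat A^T+\hat B\hat B^T-\hat S_\tau\hat B\hat B^T\hat S_\tau^T=0$, which is exactly (\ref{eq:25}) evaluated at $\hat P_\tau=I$; since $\hat A$ is Hurwitz, (\ref{eq:25}) has a unique solution, so $\hat P_\tau=I$.

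Next I would add (\ref{eq:26}) to twice (\ref{eq:28}) to obtain a single Sylvester equation solved by $\hat W=\tilde Y_\tau+2\tilde Z_\tau$, and multiply it on the left by $\hat V^T$. Here $\hat V^TA^T\hat W=\hat A^T$, $\hat V^T\hat W=I$, $\hat V^TC^T=\hat C^T$, and $\hat V^TM_i\hat V=\hat M_i$ because $\hat V=\tilde P_\tau$; the hypotheses $CS_\tau\hat V=\hat C\hat S_\tau$ and $\hat V^TM_iS_\tau\hat V=\hat M_i\hat S_\tau$, together with their transposes (taking $M_i,\hat M_i$ symmetric, as one may without loss of generality), eliminate the $S_\tau$-weighted terms, and $\hat P_\tau=I$ rewrites $\hat M_i\hat M_i$ as $\hat M_i\hat P_\tau\hat M_i$. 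The outcome is precisely (\ref{eq:27}) plus twice (\ref{eq:29}) evaluated at $\hat Y_\tau+2\hat Z_\tau=I$, so uniqueness gives $\hat Y_\tau+2\hat Z_\tau=I$.

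Conditions (\ref{op2})--(\ref{op4}) are then immediate from these two identities together with the projection formulas: $\hat M_i=\tilde P_\tau^TM_i\tilde P_\tau$ and $\hat P_\tau=I$ give (\ref{op2}); $\hat B=(\tilde Y_\tau+2\tilde Z_\tau)^TB$ and $\hat Y_\tau+2\hat Z_\tau=I$ give (\ref{op3}); $\hat C=C\tilde P_\tau$ and $\hat P_\tau=I$ give (\ref{op4}). For (\ref{op1}), the Petrov--Galerkin condition $\hat W^T\hat V=I$ and the two identities make the leading terms cancel, $-(\tilde Y_\tau+2\tilde Z_\tau)^T\tilde P_\tau+(\hat Y_\tau+2\hat Z_\tau)\hat P_\tau=-I+I=0$, so (\ref{op1}) reduces to the single requirement $L_\tau=0$.

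Establishing $L_\tau=0$ is the main obstacle. The plan is to rewrite the four Gramian terms in $L_\tau$ using the truncation identities $\tilde P_{12}=S_\tau\tilde P\hat S_\tau^T$, $\tilde P_n=\hat S_\tau\hat P\hat S_\tau^T$, $\tilde Z_{12}=S_\tau^T\bar Z_\tau\hat S_\tau$, $\tilde Z_n=\hat S_\tau^T\bar Z_{n,\tau}\hat S_\tau$ that follow from the integral representations behind (\ref{nst46})--(\ref{nst49}), and then to show that the three hypotheses force the direction matrix $V$ to vanish, whence $W=\mathcal{L}(\hat A,0)=0$, and that the surviving combination $-\tilde Q_\tau^T\tilde P_{12}+\hat Q_\tau\tilde P_n-\tilde Z_{12}^T\tilde P_\tau+\tilde Z_n\tilde P_\tau$ also cancels once $\hat C=C\tilde P_\tau$, $\hat M_i=\tilde P_\tau^TM_i\tilde P_\tau$, $\hat P_\tau=I$, and $\hat Y_\tau+2\hat Z_\tau=I$ are substituted. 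I expect the delicate bookkeeping to be the crux: one must track how the full-horizon objects $\tilde P$, $\hat P$, $\bar Z_\tau$, $\bar Z_{n,\tau}$ sit relative to the projection, and the three hypotheses are exactly the leverage that lets their $S_\tau$-shifted pieces be re-expressed through reduced quantities. Once $L_\tau=0$ is in hand, (\ref{op1}) holds and the theorem follows.
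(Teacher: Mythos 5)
Your first three steps coincide exactly with the paper's Appendix B: pre-multiplying (\ref{eq:24}) by $\hat{W}^T$ and invoking uniqueness of (\ref{eq:25}) to get $\hat{P}_{\tau}=I$, then projecting the Sylvester equation satisfied by $\tilde{Y}_{\tau}+2\tilde{Z}_{\tau}$ (the paper works with its transpose and post-multiplies by $\hat{V}$, which is the same computation) and invoking uniqueness again to get $\hat{Y}_{\tau}+2\hat{Z}_{\tau}=I$; conditions (\ref{op2})--(\ref{op4}) then follow exactly as you say. That part is correct, and it is in fact the entirety of the paper's own argument --- notably, the appendix concludes only that (\ref{op2})--(\ref{op4}) are satisfied, even though the theorem statement claims (\ref{op1})--(\ref{op4}).

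The gap is in your treatment of (\ref{op1}). You correctly reduce it to the single claim $L_{\tau}=0$, but you never establish that claim: what you offer is a plan (``show the hypotheses force $V$ to vanish, and that the surviving combination also cancels'') with no computation behind it. It is not evident that the hypotheses $\hat{W}^TS_{\tau}B=\hat{S}_{\tau}\hat{B}$, $CS_{\tau}\hat{V}=\hat{C}\hat{S}_{\tau}$, and $\hat{V}^TM_iS_{\tau}\hat{V}=\hat{M}_i\hat{S}_{\tau}$ annihilate the direction matrix $V$: its terms involve the \emph{full-horizon} Gramians, e.g. $\tilde{P}^TS_{\tau}^TC^T\hat{C}$ and $\hat{P}\hat{S}_{\tau}^T\hat{C}\hat{C}^T$, whereas the hypotheses (even after transposition) only control expressions built from $\tilde{P}_{\tau}=\hat{V}$, so the reduction you need is not available without further argument. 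Likewise the cancellation of $-\tilde{Q}_{\tau}^T\tilde{P}_{12}+\hat{Q}_{\tau}\tilde{P}_{n}-\tilde{Z}_{12}^T\tilde{P}_{\tau}+\tilde{Z}_{n}\tilde{P}_{\tau}$ is asserted, not shown; your truncation identities $\tilde{P}_{12}=S_{\tau}\tilde{P}\hat{S}_{\tau}^T$ etc.\ are correct, but they are only the starting point. To be fair, the paper's proof has the same hole --- it simply never addresses (\ref{op1}) --- so your write-up matches the paper on everything the paper actually proves and is incomplete on precisely the point the paper also leaves unproven. As submitted, it does not constitute a proof of the theorem as stated.
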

\begin{proof}
The proof is detailed in Appendix B.
\end{proof}
In general, the chosen projection matrices, $\hat{V} = \tilde{P}_{\tau}$ and $\hat{W} = \tilde{Y}_{\tau} + 2\tilde{Z}_{\tau}$, along with the Petrov-Galerkin condition $\hat{W}^T\hat{V}=I$, do not satisfy the conditions $\hat{W}^TS_{\tau}B = \hat{S}_{\tau}\hat{B}$, $CS_{\tau}\hat{V} = \hat{C}\hat{S}_{\tau}$, and $\hat{V}^TM_iS_{\tau}\hat{V} = \hat{M_i}\hat{S}_{\tau}$. As a result, $\hat{P}_{\tau} \neq I$ and $\hat{Y}_{\tau} + 2\hat{Z}_{\tau} \neq I$ at the stationary points, and consequently, the optimality conditions (\ref{op2})-(\ref{op4}) are not fulfilled.
\subsection{Algorithm}
For simplicity, we have thus far assumed the desired time interval begins at 0 seconds. However, for any general time interval $[\tau_1, \tau_2]$ seconds, $\tilde{P}_\tau$, $\hat{P}_\tau$, $\tilde{Y}_\tau+2\tilde{Z}_\tau$, and $\hat{Y}_\tau+2\hat{Z}_\tau$ can be calculated by solving the following equations:
\begin{align}
A\tilde{P}_{\tau}+\tilde{P}_{\tau}\hat{A}^T+e^{A\tau_1}B\hat{B}^Te^{\hat{A}^T\tau_1}-e^{A\tau_2}B\hat{B}^Te^{\hat{A}^T\tau_2}=0,\label{neq:64}\\
\hat{A}\hat{P}_{\tau}+\hat{P}_{\tau}\hat{A}^T+e^{\hat{A}\tau_1}\hat{B}\hat{B}^Te^{\hat{A}^T\tau_1}-e^{\hat{A}\tau_2}\hat{B}\hat{B}^Te^{\hat{A}^T\tau_2}=0,\label{neq:65}\\
A^T(\tilde{Y}_\tau+2\tilde{Z}_\tau)+(\tilde{Y}_\tau+2\tilde{Z}_\tau)\hat{A}+e^{A^T\tau_1}C^T\hat{C}e^{\hat{A}\tau_1}-e^{A^T\tau_2}C^T\hat{C}e^{\hat{A}\tau_2}\nonumber\\
+2\sum_{i=1}^{p}\big(e^{A^T\tau_1}M_i\tilde{P}_{\tau}\hat{M_i}e^{\hat{A}\tau_1}- e^{A^T\tau_2}M_i\tilde{P}_{\tau}\hat{M_i}e^{\hat{A}\tau_2}\big)=0,\label{neq:68}\\
\hat{A}^T(\hat{Y}_\tau+2\hat{Z}_{\tau})+(\hat{Y}_\tau+2\hat{Z}_{\tau})\hat{A}+e^{\hat{A}^T\tau_1}\hat{C}^T\hat{C}e^{\hat{A}\tau_1}-e^{\hat{A}^T\tau_2}\hat{C}^T\hat{C}e^{\hat{A}\tau_2}\nonumber\\
+2\sum_{i=1}^{p}\big(e^{\hat{A}^T\tau_1}\hat{M_i}\hat{P}_{\tau}\hat{M_i}e^{\hat{A}\tau_1}-e^{\hat{A}^T\tau_2}\hat{M_i}\hat{P}_{\tau}\hat{M_i}e^{\hat{A}\tau_2}\big)=0.\label{neq:69}
\end{align}
We now present our proposed algorithm named as the ``Time-limited $\mathcal{H}_2$ Near-optimal Iterative Algorithm (TLHNOIA)". This algorithm begins with an arbitrary initial guess of the ROM and iteratively refines it until convergence, which is indicated by minimal change in the ROM's state-space matrices. Steps (\ref{step4}) and (\ref{step5}) calculate the projection matrices in each iteration, while steps (\ref{step6})-(\ref{step10}) employ bi-orthogonal Gram–Schmidt method for ensuring Petrov-Galerkin condition $\hat{W}^T\hat{V}=I$.
\begin{algorithm}
\caption{FLHNOIA}

\textbf{Input:} Full order system: $(A,B,C,M_1,\cdots,M_p)$; Desired time interval: $[\tau_1,\tau_2]$ sec; Initial guess of ROM: $(\hat{A},\hat{B},\hat{C},\hat{M_1},\cdots,\hat{M_p})$; Tolerance: $tol$.

\textbf{Output:} ROM: $(\hat{A},\hat{B},\hat{C},\hat{M_1},\cdots,\hat{M_p})$.

\begin{algorithmic}[1]\label{alg1}
\STATE \textbf{while}\big(relative change in $(\hat{A},\hat{B},\hat{C},\hat{M_1},\cdots,\hat{M_p})$ $>$ $tol$\big)
\STATE Solve equations (\ref{neq:64})-(\ref{neq:69}) to compute $\tilde{P}_{\tau}$, $\hat{P}_{\tau}$, $\tilde{Y}_{\tau}+2\tilde{Z}_{\tau}$, and $\hat{Y}_{\tau}+2\hat{Z}_{\tau}$.\label{step4}
\STATE Set $\hat{V}=\tilde{P}_{\tau}\hat{P}_{\tau}^{-1}$ and $\hat{W}=(\tilde{Y}_{\tau}+2\tilde{Z}_{\tau})(\hat{Y}_{\tau}+2\hat{Z}_{\tau})^{-1}$.\label{step5}
\STATE \textbf{for} $l=1,\ldots,k$ \textbf{do}\label{step6}
\STATE $v=\hat{V}(:,l)$, $v=\prod_{j=1}^{l}\big(I-\hat{V}(:,j)\hat{W}(:,j)^T\big)v$.
\STATE $w=\hat{W}(:,l)$, $w=\prod_{j=1}^{l}\big(I-\hat{W}(:,j)\hat{V}(:,j)^T\big)w$.
\STATE $v=\frac{v}{||v||_2}$, $w=\frac{w}{||w||_2}$, $v=\frac{v}{w^Tv}$, $\hat{V}(:,l)=v$, $\hat{W}(:,l)=w$.
\STATE \textbf{end for}\label{step10}
\STATE Update $\hat{A}=\hat{W}^TA\hat{V}$, $\hat{B}=\hat{W}^TB$, $\hat{C}=C\hat{V}$, $\hat{M_i}=\hat{V}^TM_i\hat{V}$.
\STATE \textbf{end while}
\end{algorithmic}
\end{algorithm}
\begin{remark}
To assess convergence, monitoring the stagnation of the ROM poles is a more reliable indicator than examining state-space realizations. This is because $\mathcal{H}_2$-optimal MOR techniques frequently generate ROMs with different state-space representations but identical transfer functions. The stagnation of ROM poles has been widely adopted as a convergence criterion in $\mathcal{H}_2$-optimal MOR algorithms due to its effectiveness \citep{goyal2019time}.
\end{remark}
\section{Computational Aspects}
This section discusses computational efficiency in implementing TLHNOIA. A key step in TLHNOIA, Step (\ref{step4}), involves calculating the matrix exponential $e^{A\tau}$, which can become computationally intensive for high-order models. To mitigate this, we can utilize Krylov subspace-based methods in \citep{kurschner2018balanced}, to approximate $e^{A\tau}B$, $Ce^{A\tau}$, and $M_ie^{A\tau}$. The most computationally demanding task within each iteration is solving Sylvester equations \ref{neq:64}) and (\ref{neq:68}). Due to the sparsity of state-space matrices in high-order dynamical systems, these Sylvester equations exhibit a \textit{``sparse-dense"} structure, a common feature in $\mathcal{H}_2$-optimal MOR algorithms. Specifically, these equations involve large sparse matrices and smaller dense matrices, i.e.,
\begin{align}
\mathcal{K}\mathcal{L}+\mathcal{L}\mathcal{M}+\mathcal{N}\mathcal{O}&=0,\nonumber
\end{align}wherein the large matrices $\mathcal{K}\in\mathbb{R}^{N\times N}$ and $\mathcal{N}\in\mathbb{R}^{N\times d}$ ($d\ll N$) are sparse, and the small matrices $\mathcal{M}\in\mathbb{R}^{n\times n}$ and $\mathcal{O}\in\mathbb{R}^{d\times n}$ are dense. To efficiently solve such equations, the efficient algorithm proposed in \citep{MPIMD11-11} can be used. The remaining steps of TLHNOIA involve relatively simple matrix computations and small Lyapunov equation solutions, which are computationally inexpensive.
\section{Illustrative Example}
This section presents an illustrative example to verify key properties of TLHNOIA. Consider a sixth-order LQO system defined by the following state-space representation:
\begin{align}
A &= \begin{bmatrix}0 & 0 & 0 & 1 & 0 & 0\\
    0 & 0 & 0 & 0 & 1 & 0\\
    0 & 0 & 0 & 0 & 0 & 1\\
    -5.4545 & 4.5455 & 0 & -0.0545 & 0.0455 & 0\\
    10 & -21 & 11 & 0.1 & -0.21 & 0.11\\
    0 & 5.5 & -6.5 & 0 & 0.055 & -0.065\end{bmatrix},\nonumber\\
B &= \begin{bmatrix}0 & 0 & 0 & 0.0909 & 0.4 & -0.5\end{bmatrix}^T,\nonumber\\
C &= \begin{bmatrix}2 & -2 & 3 & 0 & 0 & 0\end{bmatrix},\nonumber\\
M_1&=diag(0.5, 0.3, 0, 0, 0, 0).\nonumber
\end{align}
The desired time interval for this example is $[0,0.5]$ sec. To initialize TLHNOIA, the following initial guess is used:
\begin{align}
\hat{A}&=\begin{bmatrix}-0.0038&-0.8737&0.0046\\0.8737&-0.0038&0.0053\\0.0054&-0.0060&-0.0353\end{bmatrix},&\hat{B}&=\begin{bmatrix}0.3518&-0.3472&-0.2617\end{bmatrix}^T,\nonumber\\
\hat{C}&=\begin{bmatrix}-0.3454&-0.3405&0.2479\end{bmatrix},&\hat{M}_1&=\begin{bmatrix}0.0113&0.0114&0.0130\\0.0114&0.0116&0.0132\\0.0130&0.0132&0.0271\end{bmatrix}.\nonumber
\end{align} TLHNOIA was stopped when the change in eigenvalues of $\hat{A}$ stagnates, as change in the ROM's state-space realization did not cease. The resulting final ROM is:
\begin{align}
\hat{A}&=\begin{bmatrix}-6.4780&18.5589&5.5099\\-3.1636&5.9676&2.1561\\0.4278&-3.1726&0.8857\end{bmatrix},&\hat{B}&=\begin{bmatrix}0.3533&0.1300&-0.0892\end{bmatrix}^T,\nonumber\\
\hat{C}&=\begin{bmatrix}-0.6333&3.0822&1.9898\end{bmatrix},&\hat{M}_1&=\begin{bmatrix}0.0788&-0.1923&0.0385\\-0.1923&0.4899&-0.0638\\0.0385&-0.0638&0.0637\end{bmatrix}.\nonumber
\end{align}
The numerical results below confirm that this ROM, for all practical purposes, satisfies the optimality conditions (\ref{op2})-(\ref{op4}).
\begin{align}
||-\tilde{P}_{\tau}^TM_i\tilde{P}_{\tau}+\hat{P}_{\tau}\hat{M_i}\hat{P}_{\tau}||_2&=4.3029\times 10^{-9},\nonumber\\
||-(\tilde{Y}_{\tau}+2\tilde{Z}_{\tau})^TB+(\hat{Y}_{\tau}+2\hat{Z}_{\tau})\hat{B}||_2&=1.8855\times 10^{-4},\nonumber\\
||-C\tilde{P}_{\tau}+\hat{C}\hat{P}_{\tau}||_2&=1.5081\times 10^{-5}.\nonumber
\end{align}
Subsequently, a third-order ROM is generated using BT, TLBT, and HOMORA. The same initial ROM is employed to start HOMORA. Figure \ref{fig1} presents the relative output error on a logarithmic scale for the input $u(t) = 0.01cos(2t)$ within the specified $0$ to $0.5$ sec interval. As shown, TLBT and TLHNOIA exhibit superior accuracy.
\begin{figure}[!h]
  \centering
  \includegraphics[width=12cm]{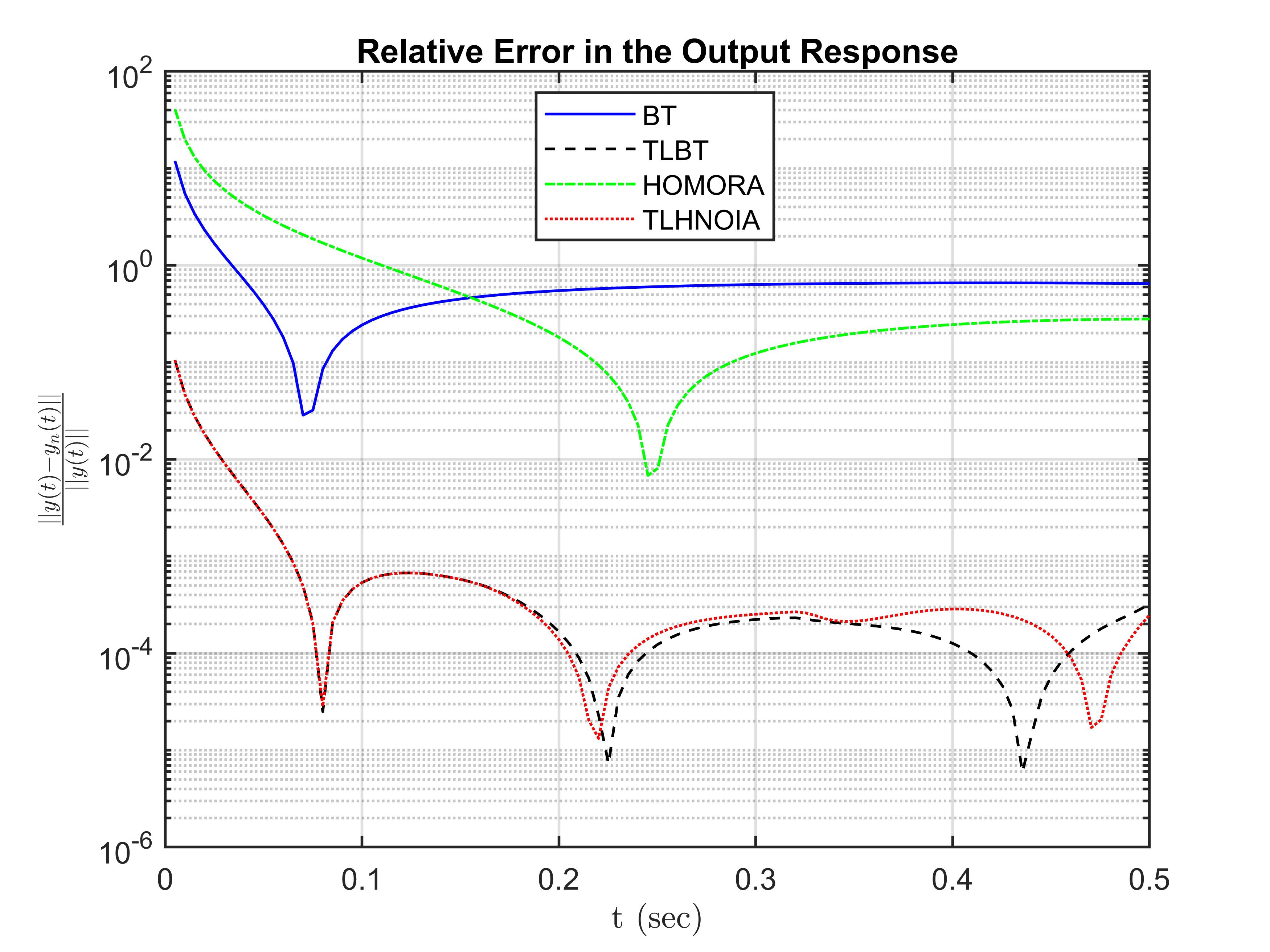}
  \caption{Relative Error in the Output Response within $[0,0.5]$ sec}\label{fig1}
\end{figure}
\section{Conclusion}
This research investigates the $\mathcal{H}_2$-optimal MOR problem within a specified finite time horizon. To quantify output strength within this interval, we introduce the time-limited $\mathcal{H}_2$ norm for LQO systems. We derive necessary conditions for achieving local optima of the squared time-limited $\mathcal{H}_2$ norm of the error. These conditions are compared to those of the standard, unconstrained $\mathcal{H}_2$-optimal MOR problem. We analyze the limitations of the Petrov-Galerkin projection method in satisfying all optimality conditions for the time-limited scenario. Consequently, we propose a Petrov-Galerkin projection algorithm that satisfies three of the four optimality conditions. Numerical experiments validate our theoretical findings and demonstrate the algorithm's effectiveness in achieving high accuracy within the desired time frame.
\section*{Appendix A}
This appendix provides a proof of Theorem \ref{th1}. Throughout the proof, the following trace properties are utilized: the invariance of the trace under matrix transposition and cyclic permutations, and the linearity of the trace operation, i.e.,
\begin{enumerate}
  \item Trace of transpose: $trace(STU)=trace(U^TT^TS^T)$.
  \item Circular permutation in trace: $trace(STU)=trace(UST)=trace(TUS)$.
  \item Trace of addition: $trace(S+T+U)=trace(S)+trace(T)+trace(U)$.
\end{enumerate}
We define a cost function, $J$, as the part of $||E||_{H_{2,\tau}}^2$ that depends on the ROM:
\begin{align}
J=trace(-2B^T\tilde{Q}_{\tau}\hat{B}+\hat{B}^T\hat{Q}_{\tau}\hat{B}).\nonumber
\end{align}
Introducing a small first-order perturbation, $\Delta_{\hat{A}}$ to $\hat{A}$, induces corresponding perturbations, $\Delta_{J}^{\hat{A}}$, $\Delta_{\tilde{Q}_{\tau}}^{\hat{A}}$, and $\Delta_{\hat{Q}_{\tau}}^{\hat{A}}$, in the cost function $J$ and matrices $\tilde{Q}_{\tau}$ and $\hat{Q}_{\tau}$, respectively. The resulting first-order change in $J$ is:
\begin{align}
\Delta_{J}^{\hat{A}}=trace(2B^T\Delta_{\tilde{Q}_{\tau}}^{\hat{A}}\hat{B}+\hat{B}^T\Delta_{\hat{Q}_{\tau}}^{\hat{A}}\hat{B}).\nonumber
\end{align}
Furthermore, based on equations (\ref{eq:30}) and (\ref{eq:31}), we find that $\Delta_{\tilde{Q}_{\tau}}^{\hat{A}}$ and $\Delta_{\hat{Q}_{\tau}}^{\hat{A}}$ are solutions to the following Lyapunov equations:
\begin{align}
&A^T\Delta_{\tilde{Q}_{\tau}}^{\hat{A}}+\Delta_{\tilde{Q}_{\tau}}^{\hat{A}}\hat{A}+\tilde{Q}_{\tau}\Delta_{\hat{A}}-S_{\tau}^TC^T\hat{C}\Delta_{\hat{S}_{\tau}}^{\hat{A}}\nonumber\\
&\hspace*{1.2cm}+\sum_{i=1}^{p}\big(M_i\Delta_{\tilde{P}_{\tau}}^{\hat{A}}\hat{M_i}-S_{\tau}^TM_i\Delta_{\tilde{P}_{\tau}}^{\hat{A}}\hat{M_i}\hat{S}_{\tau}-S_{\tau}^TM_i\tilde{P}_{\tau}\hat{M_i}\Delta_{\hat{S}_{\tau}}^{\hat{A}}\big)=0,\nonumber\\
&\hat{A}^T\Delta_{\hat{Q}_{\tau}}^{\hat{A}}+\Delta_{\hat{Q}_{\tau}}^{\hat{A}}\hat{A}+(\Delta_{\hat{A}})^T\hat{Q}_{\tau}+\hat{Q}_{\tau}\Delta_{\hat{A}}-(\Delta_{\hat{S}_{\tau}}^{\hat{A}})^T\hat{C}^T\hat{C}\hat{S}_{\tau}-\hat{S}_{\tau}^T\hat{C}^T\hat{C}\Delta_{\hat{S}_{\tau}}^{\hat{A}}\nonumber\\
&\hspace*{1.2cm}+\sum_{i=1}^{p}\big(\hat{M}_i\Delta_{\hat{P}_\tau}^{\hat{A}}\hat{M}_i-(\Delta_{\hat{S}_\tau}^{\hat{A}})^T\hat{M}_i\hat{P}_\tau\hat{M}_i\hat{S}_\tau\nonumber\\
&\hspace*{1.2cm}-\hat{S}_{\tau}^T\hat{M}_i\Delta_{\hat{P}_\tau}^{\hat{A}}\hat{M}_i\hat{S}_\tau-\hat{S}_{\tau}^T\hat{M}_i\hat{P}_\tau\hat{M}_i\Delta_{\hat{S}_\tau}^{\hat{A}}\big)=0,\nonumber
\end{align}
where
\begin{align}
\Delta_{\hat{S}_{\tau}}^{\hat{A}}=\mathcal{L}(\hat{A},\Delta_{\hat{A}})+o(||\Delta_{\hat{A}}||);\nonumber
\end{align}cf. \citep{higham2008functions}. As we focus solely on first-order perturbations, the higher-order term $o(||\Delta_{\hat{A}}||)$ is disregarded in subsequent analysis. Now,
\begin{align}
trace\Big(B\hat{B}^T(\Delta_{\tilde{Q}_{\tau}}^{\hat{A}})^T\Big)&=trace\Big((-A\tilde{P}-\tilde{P}\hat{A}^T)(\Delta_{\tilde{Q}_{\tau}}^{\hat{A}})^T\Big)\nonumber\\
&=trace\Big(-A\tilde{P}(\Delta_{\tilde{Q}_{\tau}}^{\hat{A}})^T-\tilde{P}\hat{A}^T(\Delta_{\tilde{Q}_{\tau}}^{\hat{A}})^T\Big)\nonumber\\
&=trace\Big(\tilde{P}^T(-A^T\Delta_{\tilde{Q}_{\tau}}^{\hat{A}}-\hat{A}\Delta_{\tilde{Q}_{\tau}}^{\hat{A}})\Big)\nonumber\\
&=trace\Bigg(\tilde{P}^T\Big(\tilde{Q}_{\tau}\Delta_{\hat{A}}-S_{\tau}^TC^T\hat{C}\Delta_{\hat{S}_{\tau}}^{\hat{A}}\nonumber\\
&\hspace*{1cm}+\sum_{i=1}^{p}\big(M_i\Delta_{\tilde{P}_{\tau}}^{\hat{A}}\hat{M_i}-S_{\tau}^TM_i\Delta_{\tilde{P}_{\tau}}^{\hat{A}}\hat{M_i}\hat{S}_{\tau}\nonumber\\
&\hspace*{1cm}-S_{\tau}^TM_i\tilde{P}_{\tau}\hat{M_i}\Delta_{\hat{S}_{\tau}}^{\hat{A}}\big)\Big)\Bigg)\nonumber\\
&=trace\Big(\tilde{Q}_\tau\tilde{P}(\Delta_{\hat{A}})^T-\hat{C}^TCS_\tau\tilde{P}(\Delta_{\hat{S}_\tau}^{\hat{A}})^T\nonumber\\
&\hspace*{1cm}+\sum_{i=1}^{p}\big(M_i\tilde{P}\hat{M}_i(\Delta_{\tilde{P}_\tau}^{\hat{A}})^T-M_iS_\tau\tilde{P}\hat{S}_{\tau}^T\hat{M}_i(\Delta_{\hat{S}_\tau}^{\hat{A}})^T\nonumber\\
&\hspace*{1cm}-\hat{M}_i\tilde{P}_{\tau}^TM_iS_\tau\tilde{P}(\Delta_{\hat{S}_\tau}^{\hat{A}})^T\big)\Big).\nonumber
\end{align}
Similarly, we find that
\begin{align}
trace(\hat{B}\hat{B}^T\Delta_{\hat{Q}_{\tau}}^{\hat{A}})&=trace\Big(\big(-\hat{A}\hat{P}-\hat{P}\hat{A}^T\big)\Delta_{\hat{Q}_{\tau}}^{\hat{A}}\Big)\nonumber\\
&=trace\Big(\hat{P}\big(-\hat{A}^T\Delta_{\hat{Q}_{\tau}}^{\hat{A}}-\Delta_{\hat{Q}_{\tau}}^{\hat{A}}\hat{A}\big)\Big)\nonumber\\
&=trace\Bigg(\hat{P}\Big((\Delta_{\hat{A}})^T\hat{Q}_{\tau}+\hat{Q}_{\tau}\Delta_{\hat{A}}\nonumber\\
&-(\Delta_{\hat{S}_{\tau}}^{\hat{A}})^T\hat{C}^T\hat{C}\hat{S}_{\tau}-\hat{S}_{\tau}^T\hat{C}^T\hat{C}\Delta_{\hat{S}_{\tau}}^{\hat{A}}\nonumber\\
&+\sum_{i=1}^{p}\big(\hat{M}_i\Delta_{\hat{P}_\tau}^{\hat{A}}\hat{M}_i-(\Delta_{\hat{S}_\tau}^{\hat{A}})^T\hat{M}_i\hat{P}_\tau\hat{M}_i\hat{S}_\tau\nonumber\\
&-\hat{S}_{\tau}^T\hat{M}_i\Delta_{\hat{P}_\tau}^{\hat{A}}\hat{M}_i\hat{S}_\tau-\hat{S}_{\tau}^T\hat{M}_i\hat{P}_\tau\hat{M}_i\Delta_{\hat{S}_\tau}^{\hat{A}}\big)\Big)\Bigg)\nonumber\\
&=trace\Big(2\hat{Q}_\tau\hat{P}(\Delta_{\hat{A}})^T-2\hat{P}\hat{S}_{\tau}^T\hat{C}^T\hat{C}\Delta_{\hat{S}_\tau}^{\hat{A}}\nonumber\\
&+\sum_{i=1}^{p}\big(\hat{M}_i\hat{P}\hat{M}_i\Delta_{\hat{P}_\tau}^{\hat{A}}-\hat{M}_i\hat{S}_\tau\hat{P}\hat{S}_{\tau}^T\hat{M}_i\Delta_{\hat{P}_\tau}^{\hat{A}}\nonumber\\
&-2\hat{P}\hat{S}_{\tau}^T\hat{M}_i\hat{P}_\tau\hat{M}_i\Delta_{\hat{S}_{\tau}}^{\hat{A}}\big)\Big).\nonumber
\end{align}
Further, since
\begin{align}
\tilde{P}_{\tau}=\tilde{P}-S_\tau\tilde{P}\hat{S}_{\tau}^T,\nonumber\\
\hat{P}_{\tau}=\hat{P}-\hat{S}_\tau\hat{P}\hat{S}_{\tau}^T,\nonumber
\end{align}
$\Delta_{J}^{\hat{A}}$ becomes
\begin{align}
\Delta_{J}^{\hat{A}}&=trace\Big(-2(\tilde{Q}_{\tau})^T\tilde{P}(\Delta_{\hat{A}})^T+2\hat{Q}_{\tau}\hat{P}(\Delta_{\hat{A}})^T\nonumber\\
&\hspace*{1cm}+2\tilde{P}^TS_{\tau}^TC^T\hat{C}\Delta_{\hat{S}_\tau}^{\hat{A}}-2\hat{P}\hat{S}_{\tau}^T\hat{C}^T\hat{C}\Delta_{\hat{S}_\tau}^{\hat{A}}\nonumber\\
&\hspace*{1cm}+\sum_{i=1}^{p}\big(-2M_i\tilde{P}_\tau\hat{M}_i(\Delta_{\tilde{P}_\tau}^{\hat{A}})^T+2\tilde{P}^TS_{\tau}^TM_i\tilde{P}_\tau\hat{M}_i\Delta_{\hat{S}_\tau}^{\hat{A}}\nonumber\\
&\hspace*{1cm}+\hat{M}_i\hat{P}_\tau\hat{M}_i\Delta_{\hat{P}_\tau}^{\hat{A}}-2\hat{P}\hat{S}_{\tau}^T\hat{M}_i\hat{P}_\tau\hat{M}_i\Delta_{\hat{S}_{\tau}}^{\hat{A}}\big)\Big);\nonumber
\end{align}cf. \citep{zulfiqar2020time}.

Introducing a small first-order perturbation, $\Delta_{\hat{A}}$ to $\hat{A}$, induces corresponding perturbations, $\Delta_{\tilde{P}_{\tau}}^{\hat{A}}$ and $\Delta_{\hat{P}_{\tau}}^{\hat{A}}$. Based on equations (\ref{eq:24}) and (\ref{eq:25}), we find that $\Delta_{\tilde{P}_{\tau}}^{\hat{A}}$ and $\Delta_{\hat{P}_{\tau}}^{\hat{A}}$ are solutions to the following Lyapunov equations:
\begin{align}
&A\Delta_{\tilde{P}_{\tau}}^{\hat{A}}+\Delta_{\tilde{P}_{\tau}}^{\hat{A}}\hat{A}^T+\tilde{P}_{\tau}(\Delta_{\hat{A}})^T+S_{\tau}B\hat{B}^T(\Delta_{\hat{S}_{\tau}}^{\hat{A}})^T=0,\nonumber\\
&\hat{A}\Delta_{\hat{P}_{\tau}}^{\hat{A}}+\Delta_{\hat{P}_{\tau}}^{\hat{A}}\hat{A}^T+\Delta_{\hat{A}}\hat{P}_{\tau}+\hat{P}_{\tau}(\Delta_{\hat{A}})^T\nonumber\\
&\hspace*{1cm}-\Delta_{\hat{S}_{\tau}}^{\hat{A}}\hat{B}\hat{B}^T\hat{S}_{\tau}^T-\hat{S}_{\tau}\hat{B}\hat{B}^T(\Delta_{\hat{S}_{\tau}}^{\hat{A}})^T=0.\nonumber
\end{align}
Observe that
\begin{align}
trace\Big(\sum_{i=1}^{p}M_i\tilde{P}_{\tau}\hat{M_i}(\Delta_{\tilde{P}_{\tau}}^{\hat{A}})^T\Big)&=trace\Big(\big(-A^T\bar{Z}_{\tau}-\bar{Z}_{\tau}\hat{A}\big)(\Delta_{\tilde{P}_{\tau}}^{\hat{A}})^T\Big)\nonumber\\
&=trace\Big(\big(-A\Delta_{\tilde{P}_{\tau}}^{\hat{A}}-\Delta_{\tilde{P}_{\tau}}^{\hat{A}}\hat{A}^T\big)\bar{Z}_{\tau}^T\Big)\nonumber\\
&=trace\big(\bar{Z}_{\tau}^T\tilde{P}_\tau(\Delta_{\hat{A}})^T-\hat{B}B^TS_{\tau}^T\bar{Z}_{\tau}\Delta_{\hat{S}_{\tau}}^{\hat{A}}\big),\nonumber
\end{align}
and
\begin{align}
trace\Big(\sum_{i=1}^{p}\hat{M_i}\hat{P}_{\tau}\hat{M_i}\Delta_{\hat{P}_{\tau}}^{\hat{A}}\Big)&=trace\Big(\big(-\hat{A}^T\bar{Z}_{n,\tau}-\bar{Z}_{n,\tau}\hat{A}\big)\Delta_{\hat{P}_{\tau}}^{\hat{A}}\Big)\nonumber\\
&=trace\Big(\big(-\hat{A}\Delta_{\hat{P}_{\tau}}^{\hat{A}}-\Delta_{\hat{P}_{\tau}}^{\hat{A}}\hat{A}^T\big)\bar{Z}_{n,\tau}\Big)\nonumber\\
&=2trace\big(\bar{Z}_{n,\tau}\hat{P}_{\tau}(\Delta_{\hat{A}})^T-\hat{B}\hat{B}^T\hat{S}_{\tau}^T\bar{Z}_{n,\tau}\Delta_{\hat{S}_\tau}^{\hat{A}}\big).\nonumber
\end{align}
Consequently, $\Delta_{J}^{\hat{A}}$ becomes
\begin{align}
\Delta_{J}^{\hat{A}}&=trace\Big(-2(\tilde{Q}_{\tau})^T\tilde{P}(\Delta_{\hat{A}})^T+2\hat{Q}_{\tau}\hat{P}(\Delta_{\hat{A}})^T\nonumber\\
&\hspace*{0.5cm}-2\bar{Z}_{\tau}^T\tilde{P}_{\tau}(\Delta_{\hat{A}})^T+2\bar{Z}_{n,\tau}\hat{P}_{\tau}(\Delta_{\hat{A}})^T\nonumber\\
&\hspace*{0.5cm}+2\hat{B}B^TS_{\tau}^T\bar{Z}_{\tau}\Delta_{\hat{S}_{\tau}}^{\hat{A}}-2\hat{B}\hat{B}^T\hat{S}_{\tau}^T\bar{Z}_{n,\tau}\Delta_{\hat{S}_{\tau}}^{\hat{A}}\nonumber\\
&\hspace*{0.5cm}+2\tilde{P}^TS_{\tau}^TC^T\hat{C}\Delta_{\hat{S}_{\tau}}^{\hat{A}}-2\hat{P}\hat{S}_{\tau}^T\hat{C}\hat{C}^T\Delta_{\hat{S}_{\tau}}^{\hat{A}}\nonumber\\
&\hspace*{0.5cm}+\sum_{i=1}^{p}\big(2\tilde{P}^TS_{\tau}^TM_i\tilde{P}_{\tau}\hat{M_i}\Delta_{\hat{S}_{\tau}}^{\hat{A}}-2\hat{P}\hat{S}_{\tau}^T\hat{M_i}\hat{P}_{\tau}\hat{M_i}\Delta_{\hat{S}_{\tau}}^{\hat{A}}\big)\Big)\nonumber\\
&=trace\Big(-2(\tilde{Q}_{\tau})^T\tilde{P}(\Delta_{\hat{A}})^T+2\hat{Q}_{\tau}\hat{P}(\Delta_{\hat{A}})^T\nonumber\\
&\hspace*{0.5cm}-2\bar{Z}_{\tau}^T\tilde{P}_{\tau}(\Delta_{\hat{A}})^T+2\bar{Z}_{n,\tau}\hat{P}_{\tau}(\Delta_{\hat{A}})^T+2V\Delta_{\hat{S}_{\tau}}^{\hat{A}}\Big).\nonumber
\end{align}
Recall that $\Delta_{\hat{S}_{\tau}}^{\hat{A}}=\mathcal{L}(\hat{A},\Delta_{\hat{A}})$. Interchanging the trace and integral operators yields:
\begin{align}
trace(V\Delta_{\hat{S}_{\tau}}^{\hat{A}})=trace(W\Delta_{\hat{A}});\nonumber
\end{align}cf. \citep{higham2008functions,petersen2008matrix}. Resultantly,
\begin{align}
\Delta_{J}^{\hat{A}}=2trace\Big(\big(-\tilde{Q}_{\tau}^T\tilde{P}+\hat{Q}_{\tau}\hat{P}-\bar{Z}_{\tau}^T\tilde{P}_{\tau}+\bar{Z}_{n,\tau}\hat{P}_{\tau}+W^T\big)(\Delta_{\hat{A}})^T\Big)\nonumber
\end{align}
Consequently, the gradient of $J$ with respect to $\hat{A}$ is:
\begin{align}
\nabla_{J}^{\hat{A}}=2\big(-\tilde{Q}_{\tau}^T\tilde{P}+\hat{Q}_{\tau}\hat{P}-\bar{Z}_{\tau}^T\tilde{P}_{\tau}+\bar{Z}_{n,\tau}\hat{P}_{\tau}+W^T\big);\nonumber
\end{align}cf. \citep{reiter2024h2}.
It is evident that
\begin{align}
-\tilde{Q}_{\tau}^T\tilde{P}+\hat{Q}_{\tau}\hat{P}-\bar{Z}_{\tau}^T\tilde{P}_{\tau}+\bar{Z}_{n,\tau}\hat{P}_{\tau}+W^T=0\label{nst79}
\end{align} is a necessary condition for the local optimum of $||E||_{H_{2,\tau}}^2$. By substituting equations (\ref{nst46})-(\ref{nst49}) into (\ref{nst79}), we obtain:
\begin{align}
-\tilde{Q}_{\tau}^T\tilde{P}_{\tau}+\hat{Q}_{\tau}\hat{P}_{\tau}-\tilde{Z}_{\tau}^T\tilde{P}_{\tau}+\hat{Z}_{\tau}\hat{P}_{\tau}+L_\tau=0.\nonumber
\end{align} Given that $\tilde{Q}_{\tau}=\tilde{Y}_{\tau}+\tilde{Z}_{\tau}$ and $\hat{Q}_{\tau}=\hat{Y}_{\tau}+\hat{Z}_{\tau}$, the equation simplifies to:
\begin{align}
-(\tilde{Y}_{\tau}+2\tilde{Z}_{\tau})^T\tilde{P}_{\tau}+(\hat{Y}_{\tau}+2\hat{Z}_{\tau})\hat{P}_{\tau}+L_\tau&=0.\nonumber
\end{align}

A small first-order change to the matrix $\hat{M_i}$, denoted as $\Delta_{\hat{M_i}}$, induces corresponding changes in other variables: $J$ becomes $J+\Delta_{J}^{\hat{M_i}}$, $\tilde{Q}_{\tau}$ becomes $\tilde{Q}_{\tau}+\Delta_{\tilde{Q}_{\tau}}^{\hat{M_i}}$, and $\hat{Q}_{\tau}$ becomes $\hat{Q}_{\tau}+\Delta_{\hat{Q}_{\tau}}^{\hat{M_i}}$. The resulting first-order perturbation in $J$, denoted $\Delta_{J}^{\hat{M_i}}$, can be expressed as
\begin{align}
\Delta_{J}^{\hat{M_i}}=trace(-2B^T\Delta_{\tilde{Q}_{\tau}}^{\hat{M_i}}\hat{B}+\hat{B}^T\Delta_{\hat{Q}_{\tau}}^{\hat{M_i}}\hat{B}).\nonumber
\end{align}
Furthermore, based on equations (\ref{eq:30}) and (\ref{eq:31}), $\Delta_{\tilde{Q}_{\tau}}^{\hat{M_i}}$ and $\Delta_{\hat{Q}_{\tau}}^{\hat{M_i}}$ are solutions to the following Lyapunov equations:
\begin{align}
&A^T\Delta_{\tilde{Q}_{\tau}}^{\hat{M_i}}+\Delta_{\tilde{Q}_{\tau}}^{\hat{M_i}}\hat{A}+M_i\tilde{P}_\tau\Delta_{\hat{M}_i}-S_{\tau}^TM_i\tilde{P}_\tau\Delta_{\hat{M}_i}\hat{S}_{\tau}=0,\nonumber\\
&\hat{A}^T\Delta_{\hat{Q}_{\tau}}^{\hat{M_i}}+\Delta_{\hat{Q}_{\tau}}^{\hat{M_i}}\hat{A}+\Delta_{\hat{M}_i}\hat{P}_\tau\hat{M}_i+\hat{M}_i\hat{P}_\tau\Delta_{\hat{M}_i}\nonumber\\
&\hspace*{1.15cm}-\hat{S}_{\tau}^T\Delta_{\hat{M}_i}\hat{P}_\tau\hat{M}_i\hat{S}_\tau-\hat{S}_{\tau}^T\hat{M}_i\hat{P}_\tau\Delta_{\hat{M}_i}\hat{S}_\tau=0.\nonumber
\end{align}
Observe that
\begin{align}
trace(B^T\Delta_{\tilde{Q}_{\tau}}^{\hat{M_i}}\hat{B})&=trace\big(B\hat{B}^T(\Delta_{\tilde{Q}_{\tau}}^{\hat{M_i}})^T\big)\nonumber\\
&=trace\Big(\big(-A\tilde{P}-\tilde{P}\hat{A}^T\big)(\Delta_{\tilde{Q}_{\tau}}^{\hat{M_i}})^T\Big)\nonumber\\
&=trace\Big(\big(-A^T\Delta_{\tilde{Q}_{\tau}}^{\hat{M_i}}-\Delta_{\tilde{Q}_{\tau}}^{\hat{M_i}}\hat{A}\big)\tilde{P}^T\Big)\nonumber\\
&=trace\Big(\big(M_i\tilde{P}_\tau\Delta_{\hat{M}_i}-S_{\tau}^TM_i\tilde{P}_\tau\Delta_{\hat{M}_i}\hat{S}_{\tau}\big)\tilde{P}^T\Big)\nonumber\\
&=trace(\tilde{P}_{\tau}M_i\tilde{P}_\tau(\Delta_{\hat{M}_i})^T).\nonumber
\end{align}
Additionally,
\begin{align}
trace(\hat{B}^T\Delta_{\hat{Q}_{\tau}}^{\hat{M_i}}\hat{B})&=trace\big(\hat{B}\hat{B}^T\Delta_{\hat{Q}_{\tau}}^{\hat{M_i}}\big)\nonumber\\
&=trace\Big(\big(-\hat{A}\hat{P}-\hat{P}\hat{A}^T\big)\Delta_{\hat{Q}_{\tau}}^{\hat{M_i}}\Big)\nonumber\\
&=trace\Big(\big(-\hat{A}^T\Delta_{\hat{Q}_{\tau}}^{\hat{M_i}}-\Delta_{\hat{Q}_{\tau}}^{\hat{M_i}}\hat{A}\big)\hat{P}\Big)\nonumber\\
&=trace\Big(\big(\Delta_{\hat{M}_i}\hat{P}_\tau\hat{M}_i+\hat{M}_i\hat{P}_\tau\Delta_{\hat{M}_i}\nonumber\\
&-\hat{S}_{\tau}^T\Delta_{\hat{M}_i}\hat{P}_\tau\hat{M}_i\hat{S}_\tau-\hat{S}_{\tau}^T\hat{M}_i\hat{P}_\tau\Delta_{\hat{M}_i}\hat{S}_\tau\big)\hat{P}\Big)\nonumber\\
&=trace\big(2\hat{P}_{\tau}\hat{M_i}\hat{P}_{\tau}\big).\nonumber
\end{align}
Consequently, $\Delta_{J}^{\hat{M_i}}$ can be expressed as:
\begin{align}
\Delta_{J}^{\hat{M_i}}=2trace\big((-\tilde{P}_{\tau}^TM_i\tilde{P}_{\tau}+\hat{P}_{\tau}\hat{M_i}\hat{P}_{\tau})(\Delta_{\hat{M_i}})^T\big).\nonumber
\end{align}
From this, the gradient of $J$ with respect to $\hat{M_i}$ is:
\begin{align}
\nabla_{J}^{\hat{M_i}}=2(-\tilde{P}_{\tau}^TM_i\tilde{P}_{\tau}+\hat{P}_{\tau}\hat{M_i}\hat{P}_{\tau}).\nonumber
\end{align}
Therefore, a necessary condition for a local minimum of $||E||_{\mathcal{H}_{2,\tau}}^2$ is:
\begin{align}
-\tilde{P}_{\tau}^TM_i\tilde{P}_{\tau}+\hat{P}_{\tau}\hat{M_i}\hat{P}_{\tau}=0.\nonumber
\end{align}

A small perturbation to the matrix $\hat{B}$, denoted as $\Delta_{\hat{B}}$, results in corresponding changes to other variables: $J$ becomes $J+\Delta_{J}^{\hat{B}}$, $\tilde{P}_{\tau}$ becomes $\tilde{P}_{\tau}+\Delta_{\tilde{P}_{\tau}}^{\hat{B}}$, $\hat{P}_{\tau}$ becomes $\hat{P}_{\tau}+\Delta_{\hat{P}_{\tau}}^{\hat{B}}$, $\tilde{Q}_{\tau}$ becomes $\tilde{Q}_{\tau}+\Delta_{\tilde{Q}_{\tau}}^{\hat{B}}$, and $\hat{Q}_{\tau}$ becomes $\hat{Q}_{\tau}+\Delta_{\hat{Q}_{\tau}}^{\hat{B}}$. The resulting first-order change in $J$, represented by $\Delta_{J}^{\hat{B}}$, can be expressed as:
\begin{align}
\Delta_{J}^{\hat{B}}=trace\big(-2\tilde{Q}_{\tau}^TB(\Delta_{\hat{B}})^T+2\hat{Q}_{\tau}\hat{B}(\Delta_{\hat{B}})^T-2B\hat{B}(\Delta_{\tilde{Q}_{\tau}}^{\hat{B}})^T+\hat{B}\hat{B}^T\Delta_{\hat{Q}_{\tau}}^{\hat{B}}\big).\nonumber
\end{align}
Based on equations (\ref{eq:24}) through (\ref{eq:31}), the variables $\Delta_{\tilde{P}_{\tau}}^{\hat{B}}$, $\Delta_{\hat{P}_{\tau}}^{\hat{B}}$, $\Delta_{\tilde{Q}_{\tau}}^{\hat{B}}$, and $\Delta_{\hat{Q}_{\tau}}^{\hat{B}}$ satisfy the following equations:
\begin{align}
A\Delta_{\tilde{P}_{\tau}}^{\hat{B}}+\Delta_{\tilde{P}_{\tau}}^{\hat{B}}\hat{A}^T+ B(\Delta_{\hat{B}})^T-S_{\tau}B(\Delta_{\hat{B}})^T\hat{S}_{\tau}^T=0,\nonumber\\
\hat{A}\Delta_{\hat{P}_{\tau}}^{\hat{B}}+\Delta_{\hat{P}_{\tau}}^{\hat{B}}\hat{A}^T+\Delta_{\hat{B}}\hat{B}^T+\hat{B}(\Delta_{\hat{B}})^T-\hat{S}_{\tau}\Delta_{\hat{B}}\hat{B}^T\hat{S}_{\tau}^T-\hat{S}_\tau\hat{B}(\Delta_{\hat{B}})^T\hat{S}_{\tau}^T=0\nonumber\\
A^T\Delta_{\tilde{Q}_{\tau}}^{\hat{B}}+\Delta_{\tilde{Q}_{\tau}}^{\hat{B}}\hat{A}+\sum_{i=1}^{p}\big(M_i\Delta_{\tilde{P}_{\tau}}^{\hat{B}}\hat{M_i}-S_{\tau}^TM_i\Delta_{\tilde{P}_{\tau}}^{\hat{B}}\hat{M_i}\hat{S}_{\tau}\big)=0,\nonumber\\
\hat{A}^T\Delta_{\hat{Q}_{\tau}}^{\hat{B}}+\Delta_{\hat{Q}_{\tau}}^{\hat{B}}\hat{A}+\sum_{i=1}^{p}\big(\hat{M_i}\Delta_{\hat{P}_{\tau}}^{\hat{B}}\hat{M_i}-\hat{S}_{\tau}^T\hat{M_i}\Delta_{\hat{P}_{\tau}}^{\hat{B}}\hat{M_i}\hat{S}_{\tau}\big)=0.\nonumber
\end{align}
It can be observed that
\begin{align}
trace\big(B\hat{B}^T(\Delta_{\tilde{Q}_{\tau}}^{\hat{B}})^T\big)&=trace\Big(\big(-A\tilde{P}-\tilde{P}\hat{A}^T\big)(\Delta_{\tilde{Q}_{\tau}}^{\hat{B}})^T\Big)\nonumber\\
&=trace\Big(\big(-A^T\Delta_{\tilde{Q}_{\tau}}^{\hat{B}}-\Delta_{\tilde{Q}_{\tau}}^{\hat{B}}\hat{A}\big)\tilde{P}^T\Big)\nonumber\\
&=trace\Bigg(\Big(\sum_{i=1}^{p}\big(M_i\Delta_{\tilde{P}_{\tau}}^{\hat{B}}\hat{M_i}-S_{\tau}^TM_i\Delta_{\tilde{P}_{\tau}}^{\hat{B}}\hat{M_i}\hat{S}_{\tau}\big)\Big)\tilde{P}^T\Bigg)\nonumber\\
&=trace\Big(\sum_{i=1}^{p}M_i\tilde{P}_{\tau}\hat{M_i}(\Delta_{\tilde{P}_{\tau}}^{\hat{B}})^T\Big).\nonumber
\end{align}
Similarly,
\begin{align}
trace\big(\hat{B}\hat{B}^T\Delta_{\hat{Q}_{\tau}}^{\hat{B}}\big)&=trace\Big(\big(-\hat{A}\hat{P}-\hat{P}\hat{A}^T\big)\Delta_{\hat{Q}_{\tau}}^{\hat{B}}\Big)\nonumber\\
&=trace\Big(\big(-\hat{A}^T\Delta_{\hat{Q}_{\tau}}^{\hat{B}}-\Delta_{\hat{Q}_{\tau}}^{\hat{B}}\hat{A}\big)\hat{P}\Big)\nonumber\\
&=trace\Bigg(\Big(\sum_{i=1}^{p}\big(\hat{M_i}\Delta_{\hat{P}_{\tau}}^{\hat{B}}\hat{M_i}-\hat{S}_{\tau}^T\hat{M_i}\Delta_{\hat{P}_{\tau}}^{\hat{B}}\hat{M_i}\hat{S}_{\tau}\big)\Big)\hat{P}\Bigg)\nonumber\\
&=trace\Big(\sum_{i=1}^{p}\hat{M_i}\hat{P}_{\tau}\hat{M_i}\Delta_{\hat{P}_{\tau}}^{\hat{B}}\Big).\nonumber
\end{align}
Therefore, $\Delta_{J}^{\hat{B}}$ can be expressed as:
\begin{align}
\Delta_{J}^{\hat{B}}&=trace\Big(-2\tilde{Q}_{\tau}^TB(\Delta_{\hat{B}})^T+2\hat{Q}_{\tau}\hat{B}(\Delta_{\hat{B}})^T\nonumber\\
&\hspace*{1.5cm}-2\sum_{i=1}^{p}M_i\tilde{P}_{\tau}\hat{M_i}(\Delta_{\tilde{P}_{\tau}}^{\hat{B}})^T+\sum_{i=1}^{p}\hat{M_i}\hat{P}_{\tau}\hat{M_i}\Delta_{\hat{P}_{\tau}}^{\hat{B}}\Big).\nonumber
\end{align}
It can be shown that
\begin{align}
trace\Big(\sum_{i=1}^{p}M_i\tilde{P}_{\tau}\hat{M_i}(\Delta_{\tilde{P}_{\tau}}^{\hat{B}})^T\Big)&=trace\Big(\big(-A^T\bar{Z}_{\tau}-\bar{Z}_{\tau}\hat{A}\big)(\Delta_{\tilde{P}_{\tau}}^{\hat{B}})^T\Big)\nonumber\\
&=trace\Big(\big(-A\Delta_{\tilde{P}_{\tau}}^{\hat{B}}-\Delta_{\tilde{P}_{\tau}}^{\hat{B}}\hat{A}^T\big)\bar{Z}_{\tau}^T\Big)\nonumber\\
&=trace\Big(\big(B(\Delta_{\hat{B}})^T-S_{\tau}B(\Delta_{\hat{B}})^T\hat{S}_{\tau}^T\big)\bar{Z}_{\tau}^T\Big)\nonumber\\
&=trace\big(\tilde{Z}_{\tau}^TB(\Delta_{\hat{B}})^T\big),\nonumber
\end{align}
and
\begin{align}
trace\Big(\sum_{i=1}^{p}\hat{M_i}\hat{P}_{\tau}\hat{M_i}\Delta_{\hat{P}_{\tau}}^{\hat{B}}\Big)&=trace\Big(\big(-\hat{A}^T\bar{Z}_{n,\tau}-\bar{Z}_{n,\tau}\hat{A}\big)\Delta_{\hat{P}_{\tau}}^{\hat{B}}\Big)\nonumber\\
&=trace\Big(\big(-\hat{A}\Delta_{\hat{P}_{\tau}}^{\hat{B}}-\Delta_{\hat{P}_{\tau}}^{\hat{B}}\hat{A}^T\big)\bar{Z}_{n,\tau}\Big)\nonumber\\
&=trace\Bigg(\Big(\Delta_{\hat{B}}\hat{B}^T+\hat{B}(\Delta_{\hat{B}})^T-\hat{S}_{\tau}\Delta_{\hat{B}}\hat{B}^T\hat{S}_{\tau}^T\nonumber\\
&\hspace*{1.5cm}-\hat{S}_\tau\hat{B}(\Delta_{\hat{B}})^T\hat{S}_{\tau}^T\Big)\bar{Z}_{n,\tau}\Bigg)\nonumber\\
&=2trace\big(\hat{Z}_{\tau}^T\hat{B}(\Delta_{\hat{B}})^T\big).\nonumber
\end{align}
As a result, $\Delta_{J}^{\hat{B}}$ simplifies to:
\begin{align}
\Delta_{J}^{\hat{B}}&=trace\big(-2\tilde{Q}_{\tau}^TB(\Delta_{\hat{B}})^T+2\hat{Q}_{\tau}\hat{B}(\Delta_{\hat{B}})^T\nonumber\\
&\hspace*{1.5cm}-2\tilde{Z}_{\tau}^TB(\Delta_{\hat{B}})^T+2\hat{Z}_{\tau}\hat{B}(\Delta_{\hat{B}})^T\big).\nonumber
\end{align}
From this, the gradient of $J$ with respect to $\hat{B}$ is:
\begin{align}
\nabla_{J}^{\hat{B}}=2(-\tilde{Q}_{\tau}^TB+\hat{Q}_{\tau}\hat{B}-\tilde{Z}_{\tau}^TB+\hat{Z}_{\tau}\hat{B}).\nonumber
\end{align}
Therefore, a necessary condition for a local minimum of $||E||_{\mathcal{H}_{2,\tau}}^2$ is:
\begin{align}
-\tilde{Q}_{\tau}^TB+\hat{Q}_{\tau}\hat{B}-\tilde{Z}_{\tau}^TB+\hat{Z}_{\tau}\hat{B}=-(\tilde{Y}_{\tau}+2\tilde{Z}_{\tau})^TB+(\hat{Y}_{\tau}+2\hat{Z}_{\tau})\hat{B}=0.\nonumber
\end{align}

We begin by rewriting the cost function $J$ in a slightly different form:
\begin{align}
J&=trace(-2B^T(\tilde{Y}_{\tau}+\tilde{Z}_{\tau})\hat{B}+\hat{B}^T(\hat{Y}_{\tau}+\hat{Z}_{\tau})\hat{B})\nonumber\\
&=trace(-2B^T\tilde{Y}_{\tau}\hat{B}-2B^T\tilde{Z}_{\tau}\hat{B}+\hat{B}^T\hat{Y}_{\tau}\hat{B}+\hat{B}^T\hat{Z}_{\tau}\hat{B}).\nonumber
\end{align}
Observing that
\begin{align}
trace(-2B^T\tilde{Y}_{\tau}\hat{B}+\hat{B}^T\hat{Y}_{\tau}\hat{B})=trace(2C\tilde{P}_{\tau}\hat{C}^T+\hat{C}\hat{P}_{\tau}\hat{C}^T),\nonumber
\end{align}
we can rewrite $J$ as:
\begin{align}
J=trace(2C\tilde{P}_{\tau}\hat{C}^T+\hat{C}\hat{P}_{\tau}\hat{C}^T-2B^T\tilde{Z}_{\tau}\hat{B}+\hat{B}^T\hat{Z}_{\tau}\hat{B});\nonumber
\end{align}cf. \citep{goyal2019time}. Introducing a small perturbation to the matrix $\hat{C}$, denoted as $\Delta_{\hat{C}}$, leads to a corresponding change in $J$, expressed as $J+\Delta_{J}^{\hat{C}}$. The resulting first-order change in $J$, represented by $\Delta_{J}^{\hat{C}}$, is given by:
\begin{align}
\Delta_{J}^{\hat{C}}=trace(2C\tilde{P}_{\tau}(\Delta_{\hat{C}})^T+2\hat{C}\hat{P}_{\tau}(\Delta_{\hat{C}})^T)\nonumber
\end{align}
Consequently, the gradient of $J$ with respect to $\hat{C}$ is:
\begin{align}
\nabla_{J}^{\hat{C}}=2(C\tilde{P}_{\tau}+\hat{C}\hat{P}_{\tau}).\nonumber
\end{align}
Therefore, a necessary condition for a local minimum of $||E||_{\mathcal{H}_{2,\tau}}^2$ is:
\begin{align}
C\tilde{P}_{\tau}+\hat{C}\hat{P}_{\tau}=0.\nonumber
\end{align}
This concludes the proof.
\section*{Appendix B}
Pre-multiplying equation (\ref{eq:24}) by $\hat{W}^T$ yields:
\begin{align}
\hat{W}^T\big(A\tilde{P}_{\tau}+\tilde{P}_{\tau}\hat{A}^T+ B\hat{B}^T-S_{\tau}B\hat{B}^T\hat{S}_{\tau}^T\big)&=0\nonumber\\
\hat{A}+\hat{A}^T+\hat{B}\hat{B}^T-\hat{S}_{\tau}\hat{B}\hat{B}^T\hat{S}_{\tau}&=0.\nonumber
\end{align} Given the uniqueness of equation (\ref{eq:25}), we conclude that $\hat{P}_{\tau}=I$.

It can readily be noted that $(\tilde{Y}_\tau+2\tilde{Z}_\tau)^T$ and $\hat{Y}_\tau+2\hat{Z}_\tau$ satisfy the following equations:
\begin{align}
&\hat{A}^T(\tilde{Y}_\tau+2\tilde{Z}_\tau)^T+(\tilde{Y}_\tau+2\tilde{Z}_\tau)^TA+\hat{C}^TC-\hat{S}_{\tau}^T\hat{C}^TCS_{\tau}\nonumber\\
&\hspace*{2.4cm}+\sum_{i=1}^{p}2\hat{M_i}\tilde{P}_{\tau}^TM_i-\sum_{i=1}^{p}2\hat{S}_{\tau}^T\hat{M_i}\tilde{P}_{\tau}^TM_iS_{\tau}=0,\label{lneq:44}\\
&\hat{A}^T(\hat{Y}_\tau+2\hat{Z}_\tau)+(\hat{Y}_\tau+2\hat{Z}_\tau)\hat{A}+\hat{C}^T\hat{C}-\hat{S}_{\tau}^T\hat{C}^T\hat{C}\hat{S}_{\tau}\nonumber\\
&\hspace*{2.4cm}+\sum_{i=1}^{p}2\hat{M_i}\hat{P}_{\tau}^T\hat{M}_i-\sum_{i=1}^{p}2\hat{S}_{\tau}^T\hat{M_i}\hat{P}_{\tau}^T\hat{M}_i\hat{S}_{\tau}=0.\label{lneq:45}
\end{align}
By post-multiplying (\ref{lneq:44}) by $\hat{V}$ results in:
\begin{align}
&\Big(\hat{A}^T(\tilde{Y}_\tau+2\tilde{Z}_\tau)^T+(\tilde{Y}_\tau+2\tilde{Z}_\tau)^TA+\hat{C}^TC-\hat{S}_{\tau}^T\hat{C}^TCS_{\tau}\nonumber\\
&\hspace*{1.5cm}+\sum_{i=1}^{p}2\hat{M_i}\tilde{P}_{\tau}^TM_i-\sum_{i=1}^{p}2\hat{S}_{\tau}^T\hat{M_i}\tilde{P}_{\tau}^TM_iS_{\tau}\Big)\hat{V}\nonumber\\
&=\hat{A}^T+\hat{A}+\hat{C}^T\hat{C}-\hat{S}_{\tau}^T\hat{C}^T\hat{C}\hat{S}_{\tau}\nonumber\\
&\hspace*{1.5cm}+\sum_{i=1}^{p}2\hat{M_i}\hat{M_i}-\sum_{i=1}^{p}2\hat{S}_{\tau}^T\hat{M_i}\hat{M_i}\hat{S}_{\tau}=0.\nonumber
\end{align}Due to the uniqueness of equations (\ref{eq:25}) and (\ref{lneq:45}),we find that $\hat{P}_{\tau}=I$ and $\hat{Y}_\tau+2\hat{Z}_\tau=I$. Consequently, the optimality conditions (\ref{op2})-(\ref{op4}) are satisfied.

\begin{thebibliography}{}
\expandafter\ifx\csname url\endcsname\relax
  \def\url#1{\texttt{#1}}\fi
\expandafter\ifx\csname urlprefix\endcsname\relax\def\urlprefix{URL }\fi
\expandafter\ifx\csname href\endcsname\relax
  \def\href#1#2{#2} \def\path#1{#1}\fi

\end{thebibliography}


\begin{thebibliography}{10}
\expandafter\ifx\csname url\endcsname\relax
  \def\url#1{\texttt{#1}}\fi
\expandafter\ifx\csname urlprefix\endcsname\relax\def\urlprefix{URL }\fi
\expandafter\ifx\csname href\endcsname\relax
  \def\href#1#2{#2} \def\path#1{#1}\fi

\bibitem{montenbruck2017linear}
J.~M. Montenbruck, S.~Zeng, F.~Allg{\"o}wer, Linear systems with quadratic
  outputs, in: 2017 American Control Conference (ACC), IEEE, 2017, pp.
  1030--1034.

\bibitem{van2006port}
A.~Van Der~Schaft, Port-Hamiltonian systems: An introductory survey, in:
  International Congress of Mathematicians, European Mathematical Society
  Publishing House (EMS Ph), 2006, pp. 1339--1365.

\bibitem{picinbono1988optimal}
B.~Picinbono, P.~Devaut, Optimal linear-quadratic systems for detection and
  estimation, IEEE Transactions on Information Theory 34~(2) (1988) 304--311.

\bibitem{aumann2023structured}
Q.~Aumann, S.~W. Werner, Structured model order reduction for vibro-acoustic
  problems using interpolation and balancing methods, Journal of Sound and
  Vibration 543 (2023) 117363.

\bibitem{van2012model}
R.~Van~Beeumen, K.~Van~Nimmen, G.~Lombaert, K.~Meerbergen, Model reduction for
  dynamical systems with quadratic output, International Journal for Numerical
  Methods in Engineering 91~(3) (2012) 229--248.

\bibitem{benner2005dimension}
P.~Benner, V.~Mehrmann, D.~C. Sorensen, Dimension reduction of large-scale
  systems, Vol.~45, Springer, 2005.

\bibitem{antoulas2004model}
A.~Antoulas, D.~Sorensen, K.~A. Gallivan, P.~Van~Dooren, A.~Grama, C.~Hoffmann,
  A.~Sameh, Model reduction of large-scale dynamical systems, in: Computational
  Science-ICCS 2004: 4th International Conference, Krak{\'o}w, Poland, June
  6-9, 2004, Proceedings, Part III 4, Springer, 2004, pp. 740--747.

\bibitem{antoulas2015model}
A.~C. Antoulas, R.~Ionutiu, N.~Martins, E.~J.~W. ter Maten, K.~Mohaghegh,
  R.~Pulch, J.~Rommes, M.~Saadvandi, M.~Striebel, Model order reduction:
  Methods, concepts and properties, Coupled Multiscale Simulation and
  Optimization in Nanoelectronics (2015) 159--265.

\bibitem{antoulas2020interpolatory}
A.~C. Antoulas, C.~A. Beattie, S.~G{\"u}{\u{g}}ercin, Interpolatory methods for
  model reduction, SIAM, 2020.

\bibitem{moore1981principal}
B.~Moore, Principal component analysis in linear systems: Controllability,
  observability, and model reduction, IEEE Transactions on Automatic Control
  26~(1) (1981) 17--32.

\bibitem{enns1984model}
D.~F. Enns, Model reduction with balanced realizations: An error bound and a
  frequency weighted generalization, in: The 23rd IEEE Conference on Decision
  and Control, IEEE, 1984, pp. 127--132.

\bibitem{mehrmann2005balanced}
V.~Mehrmann, T.~Stykel, Balanced truncation model reduction for large-scale
  systems in descriptor form, in: Dimension Reduction of Large-Scale Systems:
  Proceedings of a Workshop held in Oberwolfach, Germany, October 19--25, 2003,
  Springer, 2005, pp. 83--115.

\bibitem{reis2008balanced}
T.~Reis, T.~Stykel, Balanced truncation model reduction of second-order
  systems, Mathematical and Computer Modelling of Dynamical Systems 14~(5)
  (2008) 391--406.

\bibitem{sandberg2004balanced}
H.~Sandberg, A.~Rantzer, Balanced truncation of linear time-varying systems,
  IEEE Transactions on Automatic Control 49~(2) (2004) 217--229.

\bibitem{son2021balanced}
N.~T. Son, P.-Y. Gousenbourger, E.~Massart, T.~Stykel, Balanced truncation for
  parametric linear systems using interpolation of gramians: a comparison of
  algebraic and geometric approaches, Model reduction of complex dynamical
  systems (2021) 31--51.

\bibitem{kramer2022balanced}
B.~Kramer, K.~Willcox, Balanced truncation model reduction for lifted nonlinear
  systems, in: Realization and Model Reduction of Dynamical Systems: A
  Festschrift in Honor of the 70th Birthday of Thanos Antoulas, Springer, 2022,
  pp. 157--174.

\bibitem{benner2017truncated}
P.~Benner, P.~Goyal, M.~Redmann, Truncated gramians for bilinear systems and
  their advantages in model order reduction, Model reduction of parametrized
  systems (2017) 285--300.

\bibitem{wong2007fast}
N.~Wong, V.~Balakrishnan, Fast positive-real balanced truncation via quadratic
  alternating direction implicit iteration, IEEE Transactions on Computer-Aided
  Design of Integrated Circuits and Systems 26~(9) (2007) 1725--1731.

\bibitem{guiver2013bounded}
C.~Guiver, M.~R. Opmeer, Bounded real and positive real balanced truncation for
  infinite-dimensional systems, Mathematical Control and Related Fields 3~(1)
  (2013) 83--119.

\bibitem{wong2004passivity}
N.~Wong, V.~Balakrishnan, C.-K. Koh, Passivity-preserving model reduction via a
  computationally efficient project-and-balance scheme, in: Proceedings of the
  41st annual Design Automation Conference, 2004, pp. 369--374.

\bibitem{sarkar2023structure}
A.~Sarkar, J.~M. Scherpen, Structure-preserving generalized balanced truncation
  for nonlinear port-hamiltonian systems, Systems \& Control Letters 174 (2023)
  105501.

\bibitem{gugercin2004survey}
S.~Gugercin, A.~C. Antoulas, A survey of model reduction by balanced truncation
  and some new results, International Journal of Control 77~(8) (2004)
  748--766.

\bibitem{van2010model}
R.~Van~Beeumen, K.~Meerbergen, Model reduction by balanced truncation of linear
  systems with a quadratic output, in: AIP Conference Proceedings, Vol. 1281,
  American Institute of Physics, 2010, pp. 2033--2036.

\bibitem{pulch2019balanced}
R.~Pulch, A.~Narayan, Balanced truncation for model order reduction of linear
  dynamical systems with quadratic outputs, SIAM Journal on Scientific
  Computing 41~(4) (2019) A2270--A2295.

\bibitem{benner2021gramians}
P.~Benner, P.~Goyal, I.~P. Duff, Gramians, energy functionals, and balanced
  truncation for linear dynamical systems with quadratic outputs, IEEE
  Transactions on Automatic Control 67~(2) (2021) 886--893.

\bibitem{wilson1970optimum}
D.~Wilson, Optimum solution of model-reduction problem, in: Proceedings of the
  Institution of Electrical Engineers, Vol. 117, IET, 1970, pp. 1161--1165.

\bibitem{gugercin2008h_2}
S.~Gugercin, A.~C. Antoulas, C.~Beattie, $\mathcal{H}_2$ model reduction for
  large-scale linear dynamical systems, SIAM journal on matrix analysis and
  applications 30~(2) (2008) 609--638.

\bibitem{xu2011optimal}
Y.~Xu, T.~Zeng, Optimal $\mathcal{H}_2$ model reduction for large scale MIMO
  systems via tangential interpolation, International Journal of Numerical
  Analysis \& Modeling 8~(1) (2011).

\bibitem{MPIMD11-11}
P.~Benner, M.~K{\o}hler, J.~Saak, Sparse-dense Sylvester equations in
  $\mathcal{H}_2$-model order reduction, Preprint MPIMD/11-11, Max Planck
  Institute Magdeburg, available from
  \url{http://www.mpi-magdeburg.mpg.de/preprints/} (Dec. 2011).

\bibitem{reiter2024h2}
S.~Reiter, I.~Pontes~Duff, I.~V. Gosea, S.~Gugercin, $\mathcal{H}_2$ optimal
  model reduction of linear systems with multiple quadratic outputs, arXiv
  preprint arXiv:2405.05951 (2024).

\bibitem{kundur2007power}
P.~Kundur, Power system stability, Power system stability and control, McGraw- Hill, New York, (1994).

\bibitem{rogers2000nature}
G.~Rogers, The nature of power system oscillations, in: Power System
  Oscillations, Springer, 2000, pp. 7--30.

\bibitem{grimble1979solution}
M.~Grimble, Solution of finite-time optimal control problems with mixed end
  constraints in the s-domain, IEEE Transactions on Automatic Control 24~(1)
  (1979) 100--108.

\bibitem{gawronski1990model}
W.~Gawronski, J.-N. Juang, Model reduction in limited time and frequency
  intervals, International Journal of Systems Science 21~(2) (1990) 349--376.

\bibitem{zulfiqar2024relative}
U.~Zulfiqar, X.~Du, Q.-Y. Song, Z.-H. Xiao, V.~Sreeram, Relative error-based
  time-limited $\mathcal{H}_2$ model order reduction via oblique projection,
  Journal of the Franklin Institute 361~(2) (2024) 1093--1114.

\bibitem{kurschner2018balanced}
P.~K{\"u}rschner, Balanced truncation model order reduction in limited time
  intervals for large systems, Advances in Computational Mathematics 44~(6)
  (2018) 1821--1844.

\bibitem{benner2021frequency}
P.~Benner, S.~W. Werner, Frequency-and time-limited balanced truncation for
  large-scale second-order systems, Linear Algebra and its Applications 623
  (2021) 68--103.

\bibitem{shaker2014time}
H.~R. Shaker, M.~Tahavori, Time-interval model reduction of bilinear systems,
  International Journal of Control 87~(8) (2014) 1487--1495.

\bibitem{song2024balanced}
Q.-Y. Song, U.~Zulfiqar, Z.-H. Xiao, M.~M. Uddin, V.~Sreeram, Balanced
  truncation of linear systems with quadratic outputs in limited time and
  frequency intervals, arXiv preprint arXiv:2402.11445 (2024).

\bibitem{goyal2019time}
P.~Goyal, M.~Redmann, Time-limited $\mathcal{H}_2$-optimal model order
  reduction, Applied Mathematics and Computation 355 (2019) 184--197.

\bibitem{sinani2019h2}
K.~Sinani, S.~Gugercin, $\mathcal{H}_2(t_f)$ optimality conditions for a
  finite-time horizon, Automatica 110 (2019) 108604.

\bibitem{higham2008functions}
N.~J. Higham, Functions of matrices: theory and computation, SIAM, 2008.

\bibitem{zulfiqar2020time}
U.~Zulfiqar, V.~Sreeram, X.~Du, Time-limited pseudo-optimal-model order
  reduction, IET Control Theory \& Applications 14~(14) (2020) 1995--2007.

\bibitem{petersen2008matrix}
K.~B. Petersen, M.~S. Pedersen, et~al., The matrix cookbook, Technical
  University of Denmark 7~(15) (2008) 510.

\end{thebibliography}

\end{document}